
\documentclass{easychair}

\usepackage{doc}


%

%

\usepackage{color,hyperref}
\usepackage{fancyvrb}
\usepackage{eurosym}
\usepackage{titlesec}
\usepackage{enumitem}
\usepackage{tabularx}
\usepackage{lastpage}
\usepackage{parskip}
\usepackage{graphicx}
\usepackage{subfigure}
\usepackage{psfrag}
\usepackage{amsmath,amssymb,amsfonts}
\usepackage{mathrsfs}
\usepackage{cite}
\usepackage{pifont} 
\usepackage{caption}
\usepackage{booktabs}
\usepackage{siunitx} 
\usepackage{algorithm}  
\usepackage{algpseudocode}

\newtheorem{lemma}{Lemma}
\newtheorem{definition}{Definition}[]
\newtheorem{proposition}{Proposition}[]
\newtheorem{theorem}{Theorem}[]

%

\renewcommand{\^}[1]{^{(#1)}}
\renewcommand{\th}[1]{#1^\text{th}}
\DeclareMathOperator{\sign}{\mathtt{sgn}}
\def\O{\mathcal{O}} 




%
\title{On Computing the Minkowski Difference of Zonotopes}

%
\author{
Matthias Althoff
}

\institute{
  Technical University of Munich, \\
  Department of Informatics, \\
  Munich, Germany\\
  \email{althoff@in.tum.de}
 }

\authorrunning{M.~Althoff}

\titlerunning{On Computing the Minkowski Difference of Zonotopes}

\begin{document}

\maketitle

\begin{abstract}
Zonotopes are becoming an increasingly popular set representation for formal verification techniques. This is mainly due to their efficient representation and their favorable computational complexity of important operations in high-dimensional spaces. In particular, zonotopes are closed under Minkowski addition and linear maps, which can be very efficiently implemented. Unfortunately, zonotopes are not closed under Minkowski difference for dimensions greater than two. However, we present algorithms that efficiently compute under-approximations and over-approximations of the Minkowski difference in generator representation. The efficiency of the proposed solutions is demonstrated by numerical experiments. These experiments show a reduced computation time in comparison to first computing the halfspace representation of zonotopes followed by computing their Minkowski difference. 
\end{abstract}

\section{Introduction} \label{sec:introduction}

Zonotopes have recently enjoyed a lot of popularity as a set representation for formal methods in engineering and computer science. One of the main reasons is that zonotopes can be efficiently stored in computer systems, especially when dealing with high-dimensional problems. Another important reason is that zonotopes are closed under linear maps and Minkowski addition as shown in Sec. \ref{sec:preliminariesAndProblemStatement}. 
We first review existing literature on zonotopes for formal verification and state estimation of continuous dynamic systems, formal verification of computer programs, and problems in computational geometry.

Today, zonotopes are widely used to compute the reachable set of continuous dynamic systems, i.e., the set of states that are reachable by the solution of a differential equation when the initial state, inputs, and parameters are uncertain within bounded sets \cite{Althoff2021b}. Early works on this problem used a variety of set representations, such as polytopes \cite{Chutinan2003}, ellipsoids \cite{Kurzhanski2000}, and level sets \cite{Mitchell2005}. All the mentioned set representations are either not closed under important operations (e.g. ellipsoids and oriented hyper-rectangles are not closed under Minkowski addition) or are computationally inefficient in high-dimensional spaces (e.g. polytopes and level sets). For linear continuous systems in particular, zonotopes provide an excellent compromise between accuracy and efficiency as first demonstrated by K{\"u}hn \cite{Kuehn1998}. Later, Girard \cite{Girard2005} extended the approach to uncertain inputs and continuous time. It has led to a wrapping-free algorithm in \cite{Girard2006}, i.e., an algorithm for which over-approximations are not accumulating. Further extensions of the previously mentioned work are the use for systems with uncertain parameters \cite{Althoff2011a}, nonlinear ordinary differential equations \cite{Althoff2008c, Althoff2013a}, and nonlinear differential-algebraic systems \cite{Althoff2014a}. Reachable sets have been applied to many technical realizations as summarized in \cite{Althoff2021b}.

Zonotopes are also used to rigorously estimate the states of dynamical systems as an alternative to observers that optimize with respect to the best estimate, such as Kalman filters. One of the first works that uses zonotopes for state-bounding observers is \cite{Combastel2003}. As with reachability analysis, this work has been extended to nonlinear systems in \cite{Alamo2003,Combastel2005} and systems with uncertain parameters \cite{Le2013}. More recent work is compared and surveyed in \cite{Althoff2021c}. A further application of zonotopes for continuous dynamic systems is model predictive control with guaranteed stability \cite{Bravo2006}. Advances in using zonotopes for control and guaranteed state estimation are summarized in \cite{Le2013b}.

In computer science, zonotopes are used as abstract domains in abstract interpretation for static program analysis \cite{Gaubolt2006}, whose implementation details can be found in \cite{Ghorbal2009}. An extension of zonotopes with (sub-)polyhedric domains can be found in \cite{Ghorbal2010}. Further extensions of zonotopes exist, but for the brevity of the literature review, they are not presented. Zonotopes are also used in automated theorem provers as a set representation \cite{Immler2015a}. Finally, zonotopes are used as bounding volumes to facilitate fast collision detection algorithms \cite{Guibas2005}.

Zonotopes are also an active research area in computational geometry. However, this paper mainly focuses on computational aspects on Minkowski difference targeted for applications in engineering and computer science. Thus, recent developments regarding combinatorics and relations to other mathematical problems are only briefly reviewed. The association of zonotopes with higher Bruhat orders is described in \cite{Felsner2001}. Coherence and enumeration of tilings of 3-zonotopes are addressed in \cite{Bailey1999}. Properties of zonotopes with large 2D-cuts are derived in \cite{Roerig2009} with examples provided by the \textit{Ukrainian Easter eggs}. In \cite{Campi1994}, an enclosure of zonoids by zonotopes is derived, which has the same support values for fixed directions. A bound for the number of generators with equal length of a zonotope required to enclose a ball up to a certain Hausdorff distance is obtained in \cite{Bourgain1993}. In \cite{Ferrez2005} it is shown that the problem of maximizing a quadratic form in $n$ binary variables when the underlying (symmetric) matrix is positive semidefinite, can be reduced to searching the extreme points of a zonotope. The problem of listing all extreme points of a zonotope is addressed in \cite{Fukuda2004}.

As shown above, the applications of zonotopes are manifold. Most works use zonotopes to enclose other sets, resulting in over-approximations. However, for many applications the ability to compute the Minkowski difference\footnote{Note that the term \textit{Pontryagin difference} is often used as a synonym for Minkowski difference. However, we use the term \textit{Minkowski difference} throughout this paper.} is essential. The use of the Minkowski difference can be exemplified through the computation of invariant sets of dynamic systems \cite{Lombardi2011}, reachability analysis \cite{Rakovic2006}, robust model predictive control \cite{Richards2007}, optimal control \cite{Kerrigan2002}, robotic path planning \cite{Bernabeu2001}, robust interval regression analysis \cite{Inuiguchi2002}, and cooperative games \cite{Danilov2000}. Providing an algorithm for computing the Minkowski difference of zonotopes would open up many new possibilities. Minkowski difference is well-known and rather straightforward to implement for polytopes for which open source implementations exist, see e.g. \cite{Herceg2013}. Since a zonotope is a special case of a polytope, one could use the algorithms for polytopes. However, this is less efficient compared to the novel computation for zonotopes as presented later.

A preliminary version of this article is available on arXiv \cite{Althoff2016e}. The previous version, however, could only compute approximations of the Minkowski sum. In this article, we will provide under-approximations and over-approximations, which are crucial for formal analyses. An under-approximation based on constrained zonotopes has recently been presented in \cite{Raghuraman2022}. However, the approaches presented in this work have computational advantages in our numerical experiments. To the best knowledge of the author, there exists no published algorithm for computing an over-approximative Minkowski difference of zonotopes. Further extensions to \cite{Althoff2016e} are that we provide a) order reduction techniques for computing the Minkowski difference, b) methods for degenerate sets, and c) show that the Minkowski difference is exact for aligned zonotopes.

The paper is organized as follows: We recall in Sec.~\ref{sec:preliminariesAndProblemStatement} some preliminaries on zonotopes and provide the definition of the Minkowski difference. In Sec.~\ref{sec:halfspaceConversion}, algorithms are presented for computing the halfspace representation of the Minkowski difference of zonotopes. The resulting halfspace representation is used to obtain under-approximations and over-approximations of the Minkowski difference in generator representation in Sec.~\ref{sec:generatorAdjustment}. Special cases for which the Minkowski difference can be computed exactly are presented in Sec.~\ref{sec:exactSolution}. The performance of tour approach is demonstrated by numerical experiments in Sec.~\ref{sec:numericalExperiments}.

\section{Preliminaries and Problem Statement} \label{sec:preliminariesAndProblemStatement}

Throughout this paper, we index elements of vectors and matrices by subscripts and enumerate vectors or matrices by superscripts in parentheses to avoid confusion with the exponentiation of a variable. For instance, $A\^k_{ij}$ is the element of the $\th{i}$ row and $\th{j}$ column of the $\th{k}$ matrix $A\^k$. We first recall the representation of a zonotope. 

\begin{definition}[Zonotope (G-Representation)]
Zonotopes are sets parameterized by a center $c \in \mathbb{R}^n$ and generators $g\^i \in \mathbb{R}^n$:
\begin{equation} \label{eq:zonotope}
 \mathcal{Z} = \Big\{ c + \sum_{i=1}^{p} \beta_i \, g\^{i} \Big| \beta_i \in [-1,1] \Big\}.
\end{equation}
 The order of a zonotope is defined as $\varrho = \frac{p}{n}$.
\end{definition}
We write in short $\mathcal{Z}=(c,g\^{1},\ldots,g\^{p})$. Zonotopes are a compact way of representing sets in high-dimensional spaces. More importantly, linear maps $M\otimes \mathcal{Z}:=\{M z |z \in \mathcal{Z}\}$ ($M\in\mathbb{R}^{q\times n}$) and Minkowski addition $\mathcal{Z}_1\oplus \mathcal{Z}_2 := \{z_1 + z_2 | z_1 \in \mathcal{Z}_1 , z_2 \in \mathcal{Z}_2\}$, as required in many of the applications mentioned in Sec. \ref{sec:introduction}, can be computed efficiently and exactly \cite{Kuehn1998b}. Given $\mathcal{Z}_1=(c, g\^{1}, \ldots, g\^{p_1})$ and $\mathcal{Z}_2=(d, h\^{1}, \ldots, h\^{p_2})$ one can efficiently compute
\begin{equation} \label{eq:additionAndMultiplication}
\begin{split}
 \mathcal{Z}_1 \oplus \mathcal{Z}_2 & = (c + d, g\^{1}, \ldots, g\^{p_1}, h\^{1}, \ldots, h\^{p_2}), \\
 M\otimes \mathcal{Z}_1 & = (M \, c, M\, g\^{1}, \ldots, M \, g\^{p_1}).
\end{split}
\end{equation}
Note that in the remainder of this paper, the symbol for set-based multiplication is often omitted for simplicity of notation. Further, one or both operands of set-based operations can be singletons and set-based multiplication has precedence over Minkowski addition. 
A zonotope can be interpreted in three ways, see e.g. \cite[p. 364]{Henk2004} and \cite[Sec.~7.3]{Ziegler1995}. All interpretations are now introduced, as we use each one in this paper to show certain properties concisely.

\paragraph{\textbf{Minkowski addition of line segments (first interpretation)}} A zonotope can be interpreted as the Minkowski addition of line segments $l\^i = [-1,1]\, g\^{i}$, which is visualized step-by-step for $\mathbb{R}^2$ in Fig. \ref{fig:zonotope}. 

\paragraph{\textbf{Projection of a hypercube (second interpretation)}} A zonotope can be interpreted as the projection of a $p$-dimensional unit hypercube $\mathcal{C} = [-1,1]^p$ onto the $n$-dimensional space by the matrix of generators $G = \begin{bmatrix} g\^{1}, & \ldots, & g\^{p} \end{bmatrix}$, which is then translated to the center $c$: $\mathcal{Z} = c \oplus G \otimes \mathcal{C}$. We write in short $\mathcal{Z} = (c, G)$.

\paragraph{\textbf{Polytopes whose faces are centrally symmetric (third interpretation)}} A zonotope can be interpreted as a polytope whose $j$-faces are centrally symmetric. As later shown in Sec.~\ref{sec:halfspaceConversion}, the facets ($(n-1)$-faces) are obtained by choosing particular halfspaces $\{x \in \mathbb{R}^n \big| {c\^{i}}^T \, x \leq d_i\}$, $c\^{i}\in\mathbb{R}^{n}$, $d_i\in\mathbb{R}$ whose intersection forms the halfspace representation (H-representation) of a zonotope:
\begin{equation} \label{eq:zonotopeHalfspace}
 \mathcal{Z}_H = \Big\{x \in \mathbb{R}^n \big| C \, x \leq d \Big\},
\end{equation}
where $C = \begin{bmatrix} c\^1, & \ldots, & c\^q \end{bmatrix}^T$ and $d = \begin{bmatrix} d_1, & \ldots, & d_q \end{bmatrix}^T$.

\begin{figure}[htb]	

  \footnotesize
  \psfrag{#c}[][]{$c$}						
  \psfrag{#l1}[][]{$l\^1$}	
  \psfrag{#l2}[][]{$l\^2$}	
  \psfrag{#l3}[][]{$l\^3$}					
    \centering		
      \subfigure[$c\oplus l\^1$]
		{\includegraphics[width=0.2\columnwidth]{./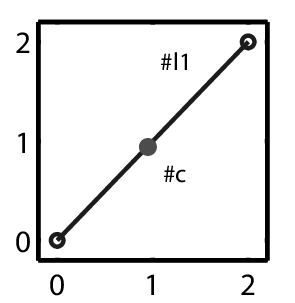}}
      \subfigure[$c\oplus l\^1 \oplus l\^2$]
		{\includegraphics[width=0.2\columnwidth]{./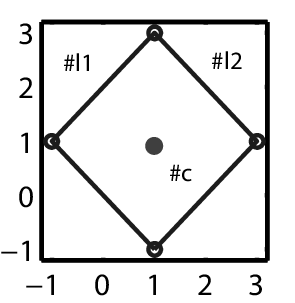}}			
      \subfigure[$c\oplus \ldots \oplus l\^3$]
		{\includegraphics[width=0.2\columnwidth]{./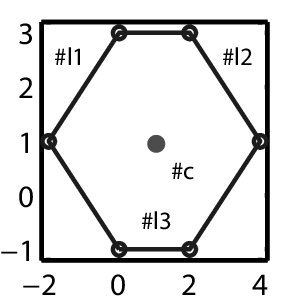}}					
      \caption{Step-by-step construction of a zonotope.}
      \label{fig:zonotope}
\end{figure}

Related to the Minkowski addition is the Minkowski difference. Given the minuend $\mathcal{Z}_m$ and the subtrahend $\mathcal{Z}_s$ of equal dimension, the Minkowski difference is defined as (see \cite{Montejano1996})
\begin{equation} \label{eq:MinkowskiDifference}
 \mathcal{Z}_m\ominus \mathcal{Z}_s = \{x \in \mathbb{R}^n | x \oplus \mathcal{Z}_s \subseteq \mathcal{Z}_m \},
\end{equation}
such that $(\mathcal{Z}_m\ominus \mathcal{Z}_s) \oplus \mathcal{Z}_s \subseteq \mathcal{Z}_m$. We refer to $\mathcal{Z}_m\ominus \mathcal{Z}_s$ as the \textit{difference}. When $\tilde{\mathcal{Z}}_m = \mathcal{A} \oplus \mathcal{Z}_s$, where $\mathcal{A}$ is an arbitrary set, we have $(\tilde{\mathcal{Z}}_m \ominus \mathcal{Z}_s) \oplus \mathcal{Z}_s = \tilde{\mathcal{Z}}_m$. An alternative definition (see \cite{Ghosh1991,Montejano1996}) is 
\begin{equation} \label{eq:infiniteIntersectionDefinition}
 \mathcal{Z}_m\ominus \mathcal{Z}_s = \bigcap_{z_s \in \mathcal{Z}_s} (\mathcal{Z}_m - z_s).
\end{equation}
The Minkowski difference can be obtained by translations along generators in Theorem~\ref{thm:minkowskiDifferenceFromGenerators}, which is based on Lemma~\ref{thm:finitelyManyIntersections}.

\begin{lemma}[Minkowski Difference from Finitely Many Intersections \mbox{\cite[Theorem 2.1]{Kolmanovsky1998a}}]
\label{thm:finitelyManyIntersections}
 When the subtrahend $\mathcal{Z}_s$ is convex, it suffices to compute the Minkowski difference from intersections of translations by the vertices $v\^i \in \mathcal{V}$ of $\mathcal{Z}_s$:
 \begin{equation*} 
 \mathcal{Z}_m\ominus \mathcal{Z}_s = \bigcap_{v\^i \in \mathcal{V}} (\mathcal{Z}_m - v\^i)
 \end{equation*}
\end{lemma}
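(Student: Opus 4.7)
The plan is to prove the two set-containments separately, starting from the alternative definition of Minkowski difference given in equation~\eqref{eq:infiniteIntersectionDefinition}. Since $\mathcal{V} \subseteq \mathcal{Z}_s$, the inclusion $\mathcal{Z}_m \ominus \mathcal{Z}_s \subseteq \bigcap_{v\^{i} \in \mathcal{V}} (\mathcal{Z}_m - v\^{i})$ is immediate: taking the intersection over a smaller index set only enlarges the result. So the substantive direction is the reverse inclusion.

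For the reverse inclusion, I would pick an arbitrary $x \in \bigcap_{v\^{i} \in \mathcal{V}} (\mathcal{Z}_m - v\^{i})$, so that $x + v\^{i} \in \mathcal{Z}_m$ for every vertex $v\^{i}$, and show $x + z_s \in \mathcal{Z}_m$ for every $z_s \in \mathcal{Z}_s$. The key fact is that $\mathcal{Z}_s$, being a convex polytope (a bounded convex set with finitely many vertices, as a zonotope), equals the convex hull of its vertex set $\mathcal{V}$. Hence any $z_s \in \mathcal{Z}_s$ can be written as a finite convex combination $z_s = \sum_i \lambda_i v\^{i}$ with $\lambda_i \geq 0$ and $\sum_i \lambda_i = 1$. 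Using $\sum_i \lambda_i = 1$ to redistribute $x$, I obtain
\begin{equation*}
x + z_s = \sum_i \lambda_i \bigl( x + v\^{i} \bigr).
\end{equation*}
Since each $x + v\^{i} \in \mathcal{Z}_m$ and $\mathcal{Z}_m$ is itself convex (it is a zonotope), the convex combination on the right-hand side lies in $\mathcal{Z}_m$. Thus $x \in \mathcal{Z}_m - z_s$ for every $z_s \in \mathcal{Z}_s$, which by~\eqref{eq:infiniteIntersectionDefinition} puts $x$ into $\mathcal{Z}_m \ominus \mathcal{Z}_s$.

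There is no real obstacle to this argument; the only point that deserves care is justifying that $\mathcal{Z}_s$ actually equals the convex hull of its finitely many vertices $\mathcal{V}$. This is where the convexity hypothesis on the subtrahend enters, together with the fact that a zonotope is a bounded polytope and therefore has a well-defined finite vertex set over which the intersection in the lemma can be taken. Convexity of the minuend $\mathcal{Z}_m$ is used implicitly but crucially in the convex-combination step; it is automatic here because $\mathcal{Z}_m$ is a zonotope, so it need not be stated as a separate hypothesis.
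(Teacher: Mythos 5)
Your proof is correct and complete. The paper itself offers no argument here --- it simply defers to \cite[Theorem 2.1]{Kolmanovsky1998} --- so your self-contained proof is a genuine addition rather than a restatement; it is, however, essentially the standard argument underlying that cited result: the easy inclusion follows because intersecting over the subset $\mathcal{V} \subseteq \mathcal{Z}_s$ can only enlarge the set, and the substantive inclusion follows by writing an arbitrary $z_s \in \mathcal{Z}_s$ as a convex combination of vertices and using convexity of the minuend to keep $x + z_s = \sum_i \lambda_i (x + v\^{i})$ inside $\mathcal{Z}_m$. One point you handle well that the lemma statement glosses over: convexity of $\mathcal{Z}_m$ is genuinely needed (the identity fails for a non-convex minuend, e.g.\ $\mathcal{Z}_m = \{0,1\} \subset \mathbb{R}$ and $\mathcal{Z}_s = [0,1]$ give $\emptyset$ on the left but $\{0\}$ on the right), and you correctly observe that it is automatic here because $\mathcal{Z}_m$ is a zonotope. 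Likewise, your remark that the lemma implicitly relies on $\mathcal{Z}_s$ being a polytope --- so that it equals the convex hull of a finite vertex set --- is the right thing to pin down; for a general convex $\mathcal{Z}_s$ one would have to replace $\mathcal{V}$ by the extreme points and invoke Krein--Milman, but for zonotopes the finite-vertex case suffices.
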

\begin{proof}
 See \cite[Theorem 2.1]{Kolmanovsky1998a}.
\end{proof}

\begin{theorem}[Minkowski Difference from Generators] \label{thm:minkowskiDifferenceFromGenerators}
 When the sets $\mathcal{Z}_m$ and $\mathcal{Z}_s$ are zonotopes, it suffices to apply the following recursive procedure using only the generators $g\^{s,i}$ of $\mathcal{Z}_s = (c\^s, g\^{s,1}, \ldots, g\^{s,p_s})$ to obtain $\mathcal{Z}_m \ominus \mathcal{Z}_s$:
 \begin{equation*} 
 \begin{split}
 \mathcal{Z}_{int}\^1 =& \mathcal{Z}_m - c\^s \\
 \forall i=1\ldots p_s: \quad \mathcal{Z}_{int}\^{i+1} =& (\mathcal{Z}_{int}\^{i} + g\^{s,i}) \cap (\mathcal{Z}_{int}\^{i} - g\^{s,i}) \\
 \mathcal{Z}_m\ominus \mathcal{Z}_s =& \mathcal{Z}_{int}\^{p_s+1}
 \end{split}
 \end{equation*}
\end{theorem}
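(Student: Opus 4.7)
The plan is to reduce the Minkowski difference against the full zonotope $\mathcal{Z}_s$ to a sequence of Minkowski differences against its constituent pieces: first the singleton $\{c^{(s)}\}$ and then each line segment $l^{(i)} = [-1,1]\,g^{(s,i)}$. This uses the first interpretation of a zonotope, i.e.\ the decomposition $\mathcal{Z}_s = \{c^{(s)}\} \oplus l^{(1)} \oplus \cdots \oplus l^{(p_s)}$.

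The key algebraic tool will be the identity
\begin{equation*}
 \mathcal{Z}_m \ominus (A \oplus B) \;=\; (\mathcal{Z}_m \ominus A) \ominus B,
\end{equation*}
valid for arbitrary sets $A,B$ of the appropriate dimension. I would verify this directly from the definition of Minkowski difference: $x \in \mathcal{Z}_m \ominus (A \oplus B)$ iff $x \oplus A \oplus B \subseteq \mathcal{Z}_m$ iff for every $a \in A$ the point $x+a$ satisfies $(x+a) \oplus B \subseteq \mathcal{Z}_m$, i.e.\ $x+a \in \mathcal{Z}_m \ominus B$, iff $x \oplus A \subseteq \mathcal{Z}_m \ominus B$, which is $x \in (\mathcal{Z}_m \ominus B) \ominus A$. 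Applying this iteratively to the decomposition of $\mathcal{Z}_s$ yields
\begin{equation*}
 \mathcal{Z}_m \ominus \mathcal{Z}_s \;=\; \bigl( \cdots ((\mathcal{Z}_m \ominus \{c^{(s)}\}) \ominus l^{(1)}) \ominus \cdots \bigr) \ominus l^{(p_s)}.
\end{equation*}
The first step is $\mathcal{Z}_m \ominus \{c^{(s)}\} = \mathcal{Z}_m - c^{(s)}$, which is exactly $\mathcal{Z}_{int}^{(1)}$.

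For the remaining steps I would apply Lemma~\ref{thm:finitelyManyIntersections}: each $l^{(i)}$ is a convex set (a line segment) whose vertices are precisely $+g^{(s,i)}$ and $-g^{(s,i)}$, so for any set $\mathcal{X}$ of matching dimension,
\begin{equation*}
 \mathcal{X} \ominus l^{(i)} \;=\; (\mathcal{X} - g^{(s,i)}) \cap (\mathcal{X} + g^{(s,i)}).
\end{equation*}
Specialising $\mathcal{X} = \mathcal{Z}_{int}^{(i)}$ reproduces exactly the recursion defining $\mathcal{Z}_{int}^{(i+1)}$, and a straightforward induction on $i$ finishes the argument, yielding $\mathcal{Z}_{int}^{(p_s+1)} = \mathcal{Z}_m \ominus \mathcal{Z}_s$.

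The only real obstacle is the Minkowski-difference associativity identity above, since Lemma~\ref{thm:finitelyManyIntersections} already handles the line-segment case and the center translation is immediate. That identity is elementary but must be stated explicitly: it is what legitimates ``peeling off'' one generator at a time rather than intersecting over all $2^{p_s}$ vertices of $\mathcal{Z}_s$ as Lemma~\ref{thm:finitelyManyIntersections} would demand if applied directly. One should also note that no convexity assumption on $\mathcal{Z}_m$ is needed for the identity, and convexity of each $l^{(i)}$ is what licenses the use of Lemma~\ref{thm:finitelyManyIntersections} at every recursive step.
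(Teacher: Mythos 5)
Your proposal is correct and follows essentially the same route as the paper: decompose $\mathcal{Z}_s$ into its center plus line segments, peel them off one at a time via the identity $\mathcal{A} \ominus (\mathcal{B}\oplus\mathcal{C}) = (\mathcal{A}\ominus\mathcal{B})\ominus\mathcal{C}$, and reduce each line-segment difference to a two-fold intersection via Lemma~\ref{thm:finitelyManyIntersections}. The only difference is that you prove the associativity identity directly from the definition, whereas the paper cites it from the literature; your element-wise verification is sound.
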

\begin{proof}
 As shown in \cite[eq.~2]{Danilov2000}, it generally holds for sets $\mathcal{A}$, $\mathcal{B}$, and $\mathcal{C}$ that
 \begin{equation} \label{eq:MinkowskiAdditionToDifference}
  \mathcal{A} \ominus (\mathcal{B} \oplus \mathcal{C}) = (\mathcal{A} \ominus \mathcal{B}) \ominus \mathcal{C}.
 \end{equation}
 Let us rewrite $\mathcal{Z}_m \ominus \mathcal{Z}_s = \mathcal{Z}_m \ominus (c\^s \bigoplus_{i=1}^{p_s} [-1,1] \otimes g\^{s,i})$. By recursively applying \eqref{eq:MinkowskiAdditionToDifference} we obtain
 \begin{equation} \label{eq:recursiveMinkoskiDifference}
  \mathcal{Z}_m \ominus \mathcal{Z}_s = \bigg(\Big((\mathcal{Z}_m - c\^s) \ominus [-1,1] \otimes g\^1 \Big) \ominus \ldots \bigg) \ominus [-1,1] \otimes g\^{p_s}.
 \end{equation}
 The Minkowski difference with $[-1,1] \otimes g\^{s,i}$ can be further simplified according to Lemma~\ref{thm:finitelyManyIntersections} by only considering the extreme cases, such that for a set $\mathcal{A}$ we have
 \begin{equation} \label{eq:generatorDifferenceToIntersection}
  \mathcal{A} \ominus [-1,1] \otimes g\^{s,i} = (\mathcal{A} + g\^{s,i}) \cap (\mathcal{A} - g\^{s,i}).
 \end{equation}
 Inserting \eqref{eq:generatorDifferenceToIntersection} into \eqref{eq:recursiveMinkoskiDifference} results in the theorem to be proven.
\end{proof}

To provide the reader with a better understanding of the Minkowski difference of zonotopes, we show three distinctive examples in Fig.~\ref{fig:minkowskiDifferenceExamples}: (a) the zonotope order of $\mathcal{Z}_d = \mathcal{Z}_m \ominus \mathcal{Z}_s$ equals the one of the minuend, (b) the order is reduced, and (c) the result is the empty set. We choose
\begin{equation*}
\begin{split}
 & \mathcal{Z}_m = \left( \begin{bmatrix} 1 \\ 1 \end{bmatrix}, \begin{bmatrix} 1 \\ 0 \end{bmatrix}, \begin{bmatrix} 0 \\ 1 \end{bmatrix}, \begin{bmatrix} 1 \\ 1 \end{bmatrix} \right), \\
 & \mathcal{Z}_{s,1} = \left( \begin{bmatrix} 0 \\ 0 \end{bmatrix}, \begin{bmatrix} 0.5 \\ -0.2 \end{bmatrix}, \begin{bmatrix} 0 \\ 0.2 \end{bmatrix} \right), 
 \mathcal{Z}_{s,2} = \left( \begin{bmatrix} 0 \\ 0 \end{bmatrix}, \begin{bmatrix} 0.5 \\ -0.5 \end{bmatrix}, \begin{bmatrix} 0 \\ 0.5 \end{bmatrix} \right),
 \mathcal{Z}_{s,3} = \left( \begin{bmatrix} 0 \\ 0 \end{bmatrix}, \begin{bmatrix} 2 \\ -0.5 \end{bmatrix}, \begin{bmatrix} 0 \\ 0.5 \end{bmatrix} \right).
\end{split}
\end{equation*}

\begin{figure}[htb]	

  \footnotesize
  \psfrag{a}[c][c]{$x_1$}						
  \psfrag{b}[c][c]{$x_2$}	
  \psfrag{c}[r][c]{$\mathcal{Z}_d \oplus \mathcal{Z}_s$}	
  \psfrag{d}[l][c]{$\mathcal{Z}_m$}			
  \psfrag{e}[l][c]{$\mathcal{Z}_s$}	
  \psfrag{f}[r][c]{$\mathcal{Z}_d$}	
    \centering		
      \subfigure[$\mathcal{Z}_d = \mathcal{Z}_m \ominus \mathcal{Z}_{s,1}$; result has the same order as $\mathcal{Z}_m$.]
		{\includegraphics[width=0.32\columnwidth]{./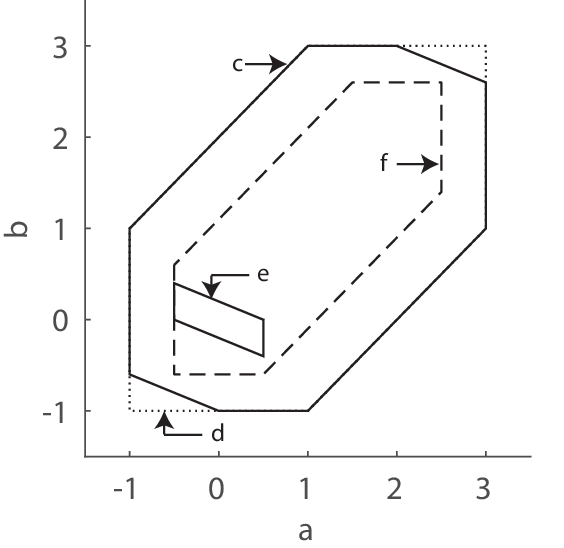} \label{fig:minkowskiDiffFull}}
	\psfrag{e}[c][c]{$\mathcal{Z}_s$}	
  \psfrag{f}[c][c]{$\mathcal{Z}_d$}	
      \subfigure[$\mathcal{Z}_d = \mathcal{Z}_m \ominus \mathcal{Z}_{s,2}$; result has a smaller order than $\mathcal{Z}_m$.]
		{\includegraphics[width=0.32\columnwidth]{./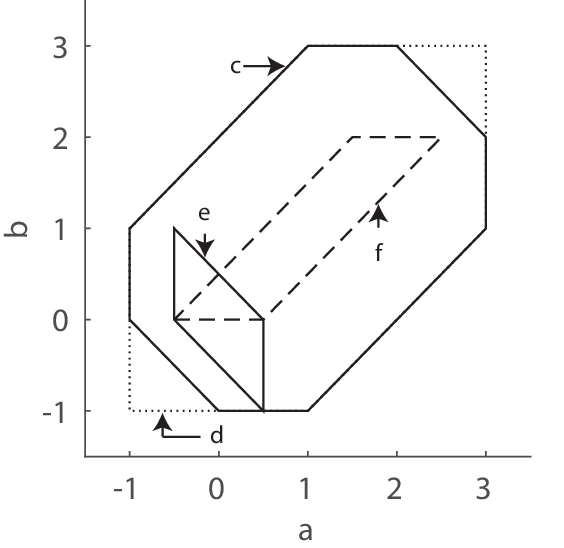} \label{fig:minkowskiDiffRed}}	
	\psfrag{d}[r][c]{$\mathcal{Z}_m$}
      \subfigure[$\mathcal{Z}_d = \mathcal{Z}_m \ominus \mathcal{Z}_{s,3}$; result is empty.]
		{\includegraphics[width=0.32\columnwidth]{./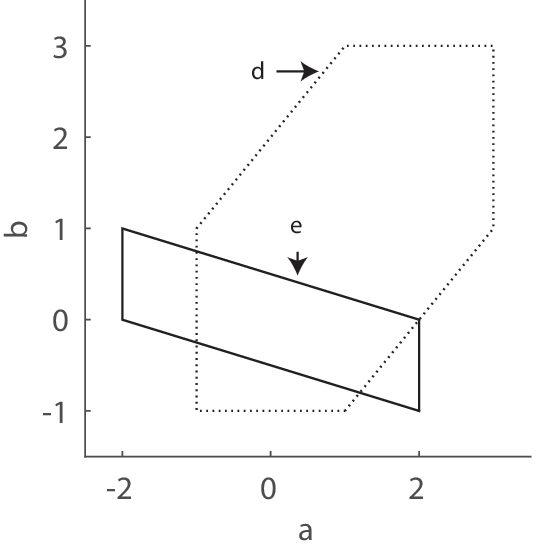} \label{fig:minkowskiDiffEmpty}}					
      \caption{Results of different Minkowski differences.}
      \label{fig:minkowskiDifferenceExamples}
\end{figure}
One can observe in Fig.~\ref{fig:minkowskiDiffFull} that all halfspaces of $\mathcal{Z}_m$ remain for $\mathcal{Z}_m \ominus \mathcal{Z}_{s,1}$. The result of $\mathcal{Z}_m \ominus \mathcal{Z}_{s,2}$ in Fig.~\ref{fig:minkowskiDiffRed} does not require all halfspaces of $\mathcal{Z}_m$. Finally, Fig.~\ref{fig:minkowskiDiffEmpty} shows that $\mathcal{Z}_m \ominus \mathcal{Z}_{s,3} = \emptyset$. The halfspace representation of the Minkowski difference of zonotopes is addressed in the next section.

\section{Halfspace Representation of the Minkowski Difference} \label{sec:halfspaceConversion}

As it is later shown in Sec.~\ref{sec:generatorAdjustment}, zonotopes are not closed under Minkowski difference (unless in the two-dimensional case and when generators are aligned). One possibility to obtain the Minkowski difference is to first convert both zonotopes into halfspace representation and then use standard algorithms for obtaining the Minkowski difference. In this section, we propose a more efficient algorithm for obtaining the halfspace representation of the Minkowski difference. For that purpose, we require the $n$-dimensional cross product, which is an extension of the well-known cross product of two three-dimensional vectors. The resulting vector is orthogonal to all $n-1$ $n$-dimensional vectors.

\begin{definition}[$n$-dimensional cross product (see \cite{Mortari1996})] \label{def:nDimCrossProduct}
Given are $n-1$ vectors $h\^i\in\mathbb{R}^n$ which are stored in a matrix $H=[h\^1,\ldots,h\^{n-1} ]\in\mathbb{R}^{n\times n-1}$. We further denote by $H^{[i]}\in\mathbb{R}^{n-1\times n-1}$ the matrix $H$, where the $\th{i}$ row is removed. The cross product $nX(H)$ of the vectors stored in $H$ is defined as
\begin{equation*}
	y=nX(H)=\begin{bmatrix} \det(H^{[1]}), & \ldots, & (-1)^{i+1}\det(H^{[i]}), & \ldots, & (-1)^{n+1}\det(H^{[n]}) \end{bmatrix}^T.
\end{equation*}
\end{definition}

From now on, we assume that all generators of each zonotope are not aligned. If two aligned generators would exist, e.g. $\gamma \, g\^i = g\^j$, $\gamma \in \mathbb{R}$, one could adjust $g\^i := (\gamma + 1) g\^i$ and remove $g\^j$. For a general zonotope, the generator matrix $G$ is of dimension $n \times p$. The normal vector of each facet is obtained from the $n$-dimensional cross product of $n-1$ generators, which have to be selected from $p$ generators for each non-parallel facet, so that one obtains $2\binom{p}{n-1}$ facets \cite[Lemma 3.1]{Gover2010}. This is always possible since we assume without loss of generality that all generators are not aligned. The generators that span a facet are obtained by canceling $p-n+1$ generators from the generator matrix $G$, which is denoted by $G^{\langle \gamma,\ldots,\eta \rangle}$, where $\gamma,\ldots,\eta$ are the $p-n+1$ indices of the generators that are taken out of $G$. 
\begin{theorem}[H-Representation of Zonotopes \mbox{\cite[Theorem.~7]{Althoff2010d}}] \label{thm:hZonotope}
The halfspace representation $C\, x\leq d$ of a zonotope $(c,G)$ with $p$ independent generators is
\begin{gather*}
	C=\begin{bmatrix} C^+ \\ -C^+\end{bmatrix}, \quad C^+= \begin{bmatrix} C_1^+ \\ \vdots \\ C_\nu^+ \end{bmatrix}, \quad C_i^+=\frac{nX(G^{\langle \gamma,\ldots,\eta \rangle})^T}{\|nX(G^{\langle \gamma,\ldots,\eta \rangle})\|_2}, \\
	d=\begin{bmatrix} d^+ \\ d^-\end{bmatrix} = \begin{bmatrix} {C^+}\, c + \Delta d \\ -{C^+}\, c + \Delta d \end{bmatrix}, \quad
	\Delta d = \sum_{\upsilon=1}^{p} |{C^+}\, g\^{\upsilon}|. 
\end{gather*}
The index $i$ varies from $1$ to $\nu = \binom{p}{n-1}$. 
\end{theorem}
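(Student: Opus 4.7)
The plan is to derive the halfspace representation from the support function of the zonotope, together with the combinatorial fact that a zonotope with $p$ generators in general position in $\mathbb{R}^n$ has exactly $2\binom{p}{n-1}$ facets, each spanned by $n-1$ of the generators.

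First, I would identify the facet normals. By the third interpretation, a zonotope is a polytope whose facets are centrally symmetric. Each facet lies in an affine hyperplane parallel to the linear span of $n-1$ of the generators, since translating along those generators remains within the facet. Given an $(n-1)$-subset obtained by removing $p-n+1$ generators (indexed by $\gamma,\ldots,\eta$), the column space of $G^{\langle \gamma,\ldots,\eta \rangle}$ is $(n-1)$-dimensional, because the assumption of non-aligned generators ensures the chosen columns are linearly independent. By Definition~\ref{def:nDimCrossProduct}, the vector $nX(G^{\langle \gamma,\ldots,\eta \rangle})$ is orthogonal to each of these $n-1$ columns (it is exactly the cofactor expansion of a determinant with a repeated column whenever one of the columns is reinserted). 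After normalization, this yields the unit outer normal $C_i^+$. The total count $\binom{p}{n-1}$ of such subsets together with central symmetry $x \mapsto 2c - x$ of the zonotope produces a matching pair $(C_i^+, -C_i^+)$ of opposite facet normals, giving the block structure of $C$ with $2\binom{p}{n-1}$ rows; this matches the facet count cited from \cite[Lemma 3.1]{Gover2010}.

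Second, I would compute the offsets as support values. For any unit vector $\eta$, the support function of $\mathcal{Z}=(c,G)$ evaluates to
\begin{equation*}
h_{\mathcal{Z}}(\eta) = \max_{x\in\mathcal{Z}} \eta^T x = \eta^T c + \max_{\beta\in[-1,1]^p} \eta^T G \beta = \eta^T c + \sum_{\upsilon=1}^{p} |\eta^T g\^{\upsilon}|,
\end{equation*}
because the maximum over a hypercube of a linear form decouples coordinatewise. Applying this to $\eta = C_i^+$ gives the $i$-th entry of $d^+ = C^+ c + \Delta d$. Applying it to $\eta = -C_i^+$ gives $-C_i^+ c + \sum_\upsilon |C_i^+ g\^\upsilon|$, which is exactly $d^-$, since the absolute value is invariant under sign flip. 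Thus the offsets in the theorem are precisely the support values in each of the $2\binom{p}{n-1}$ normal directions.

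Third, I would close the argument by verifying that these halfspaces describe $\mathcal{Z}$ exactly and not a strict superset. Containment $\mathcal{Z} \subseteq \{x : Cx \leq d\}$ is immediate from the support function calculation. For the reverse inclusion it suffices to observe that we have produced one outer halfspace per facet of the polytope $\mathcal{Z}$, each with the tight offset equal to the support value, and the count of halfspaces equals the facet count. Since a convex polytope equals the intersection of the closed halfspaces bounded by its facet hyperplanes, equality follows.

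The main obstacle in the argument is the combinatorial step: one must ensure that the $\binom{p}{n-1}$ cross-product directions really enumerate all facet normals, with no repetitions and no missing ones, under the non-alignment assumption. This follows from \cite[Lemma 3.1]{Gover2010}, so the rest of the proof reduces to the elementary support-function computation above.
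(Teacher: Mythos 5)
Your proof is correct and follows essentially the same route as the paper: the facet normals are obtained from the $n$-dimensional cross products of $(n-1)$-subsets of generators (with the count $2\binom{p}{n-1}$ and the sign-paired normals from central symmetry), and the offsets are the support values in those directions. The paper phrases the offset computation by exhibiting the explicit point $x^{(i)} = c + \sum_{\upsilon} \sign(C_i^+ g^{(\upsilon)})\, g^{(\upsilon)}$ on each facet, which is precisely the maximizer in your support-function calculation, so the two arguments coincide.
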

\begin{proof}
The following proof is more intuitive than that of \cite[Theorem.~7]{Althoff2010d} and provides valuable insights for the remainder of the paper.
The $\th{i}$ facet is spanned by $n-1$ generators, which are obtained by canceling $p-n+1$ generators with indices $\gamma,\ldots,\eta$ from $G$, which is denoted by $G^{\langle \gamma,\ldots,\eta \rangle}$ \cite[Sec.~3]{Gover2010}. This is illustrated for a two-dimensional example in Fig.~\ref{fig:generatorsReachingFacets}. As illustrated in Fig.~\ref{fig:normalVector}, the normal vector of this facet is obtained by the normalized $n$-dimensional cross product (see Def. \ref{def:nDimCrossProduct}):
\begin{equation*}
 C_i^+=nX(G^{\langle \gamma,\ldots,\eta \rangle})^T/\|nX(G^{\langle \gamma,\ldots,\eta \rangle})\|_2. 
\end{equation*}
It is sufficient to compute $\nu$ normal vectors denoted by a superscript '$+$', as the remaining $\nu$ normal vectors denoted by a superscript '$-$' differ only in sign due to the central symmetry of zonotopes. A possible point $x\^i$ on the $\th{i}$ halfspace is obtained by adding generators in the direction of the normal vector to the center (see Fig.~\ref{fig:generatorsReachingFacets}): 
\begin{equation*}
 x\^i = c + \sum_{\upsilon=1}^{p} \sign(C_i^+ g\^{\upsilon}) g\^{\upsilon},
\end{equation*}
where $\sign()$ is the sign function returning the sign of a value. Note that the generators spanning the $\th{i}$ facet are not required to reach the halfspace. To keep the result simple, however, we add them in the above formula, since they only translate $x\^i$ on the facet. Since the elements $d_i^+$ are the scalar products of any point on the $\th{i}$ halfspace with its normal vector, we obtain 
\begin{equation*}
 \begin{split}
  d_i^+ = {C_i^+} x\^i 
        = {C_i^+} \Big(c + \sum_{\upsilon=1}^{p} \sign({C_i^+}\, g\^{\upsilon}) g\^{\upsilon}\Big) 
        = {C_i^+} c + \sum_{\upsilon=1}^{p} |{C_i^+}\, g\^{\upsilon}| = {C_i^+} c + \Delta d_i.
 \end{split}
\end{equation*}
Analogously, the values for $d_i^-$ are obtained. 
\end{proof}

\begin{figure}[h!tb]
\begin{minipage}{0.4\columnwidth}
  \centering
  \psfrag{a}[c][c]{$C_1^+$}
  \psfrag{b}[c][c]{$C_2^+$}
	\psfrag{c}[c][c]{$C_3^+$}
	\psfrag{d}[c][c]{$C_4^+$}
	\psfrag{e}[c][c]{$c$}
  \psfrag{f}[c][c]{$g\^1$}
	\psfrag{g}[c][c]{$g\^2$}
	\psfrag{h}[c][c]{$g\^3$}
  \psfrag{i}[c][c]{$g\^4$}
	\psfrag{j}[c][c]{$-g\^1$}
	\psfrag{k}[c][c]{$-g\^2$}
  \psfrag{l}[c][c]{$g\^3$}
	\psfrag{m}[c][c]{$g\^2$}
	\psfrag{n}[c][c]{$g\^3$}
	\psfrag{o}[l][c]{$x\^1=x\^2$}
	\psfrag{q}[l][c]{$x\^3$}
	\psfrag{r}[l][c]{$x\^4$}
    \includegraphics[width=\columnwidth]{./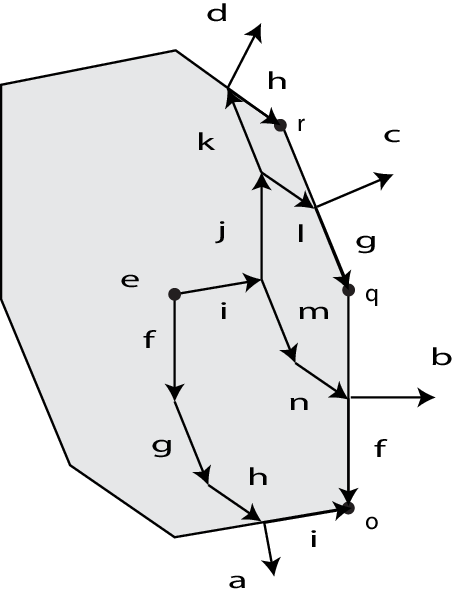}
  \caption{Various generator additions to reach corresponding facets in two dimensions. Only the generators not spanning the facet are required to reach the facet.}
  \label{fig:generatorsReachingFacets}
\end{minipage}
\hspace{0.08\columnwidth}
\begin{minipage}{0.5\columnwidth}
  \centering
  \psfrag{a}[c][c]{$0$}
  \psfrag{b}[c][c]{$-2$}
	\psfrag{c}[c][c]{$2$}
	\psfrag{x}[c][c]{$x_1$}
	\psfrag{y}[c][c]{$x_2$}
  \psfrag{z}[c][c]{$x_3$}
	\psfrag{d}[c][c]{$g\^1$}
	\psfrag{e}[c][c]{$g\^2$}
  \psfrag{f}[c][c]{$C_1^+$}
    \includegraphics[width=\columnwidth]{./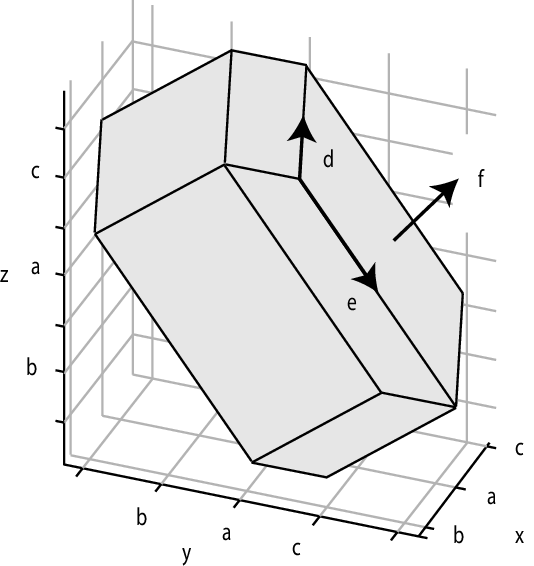}
  \caption{Normal vector of a facet of a three-dimensional zonotope spanned by two generators.}
  \label{fig:normalVector}
\end{minipage}
\end{figure}

%
Next, we directly obtain the halfspace representation of the intersection of two zonotopes, which are identical, except that one of them is translated by a vector $2 \, h$.

\begin{lemma}[H-Representation of Intersection of Translated Zonotopes] \label{thm:hZonotopeIntersection}
Given are two zonotopes $\mathcal{Z}_o = (c-h,G)$ and $\mathcal{Z}_t = (c+h,G)$. The halfspace representation $C\, x\leq d$ of the intersection $\mathcal{Z}_o \cap \mathcal{Z}_t$ has an identical $C$ matrix as presented in Theorem~\ref{thm:hZonotope}, but a changed $d$ vector
\begin{gather}\label{eq:translatedZonotope}
	d=\begin{bmatrix} d^+ \\ d^-\end{bmatrix} = \begin{bmatrix} {C^+}\, c + \Delta d - \Delta d_{trans} \\ -{C^+}\, c + \Delta d - \Delta d_{trans} \end{bmatrix}, \quad
	\Delta d = \sum_{\upsilon=1}^{p} |{C^+}\, g\^{\upsilon}|, \quad \Delta d_{trans} = |{C^+} h|.
\end{gather}
\end{lemma}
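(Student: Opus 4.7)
The plan is to observe that because $\mathcal{Z}_o$ and $\mathcal{Z}_t$ share the same generator matrix $G$, Theorem~\ref{thm:hZonotope} immediately gives them the same facet normal matrix $C=[{C^+}^T,-{C^+}^T]^T$; only the offset vector $d$ depends on the center. So once the normals are identified, the task reduces to computing, for each facet direction, the tighter of the two offsets and then assembling them into the claimed formula.

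First I would write out the two half-space representations separately using Theorem~\ref{thm:hZonotope}. For $\mathcal{Z}_o=(c-h,G)$ we get $C^+x \le C^+(c-h)+\Delta d$ and $-C^+x \le -C^+(c-h)+\Delta d$, where $\Delta d=\sum_{\upsilon}|C^+ g\^{\upsilon}|$ is center-independent. Analogously, for $\mathcal{Z}_t=(c+h,G)$ we get $C^+x \le C^+(c+h)+\Delta d$ and $-C^+x \le -C^+(c+h)+\Delta d$. Since the intersection is the conjunction of all halfspace constraints, and since the constraints come in pairs with identical normal vectors, I can combine each pair by taking the smaller right-hand side, component by component.

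Next I would apply the elementary identity $\min(a-b,\,a+b)=a-|b|$ component-wise. For the $C^+$ block this gives the bound $C^+c+\Delta d - |C^+h|$, and for the $-C^+$ block, using $\min(-C^+(c-h), -C^+(c+h))=-C^+c-|C^+h|$, it gives $-C^+c+\Delta d-|C^+h|$. Setting $\Delta d_{trans}=|C^+h|$ yields exactly equation~\eqref{eq:translatedZonotope}.

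The only subtle point, and likely the main pitfall, is the sign-bookkeeping on the ``$-$'' block: the constant $h$ appears with opposite sign in the two zonotopes, so one must be careful that after flipping the inequality direction (equivalently, negating the normal), the minimum in the offset still produces $-|C^+h|$ rather than $+|C^+h|$. Once this is checked, no further work is needed, since the redundancy analysis from Theorem~\ref{thm:hZonotope} (only $\nu$ independent normal directions, the others being negatives) carries over verbatim to the intersection because the set of facet-normal \emph{directions} is unchanged by a translation of either zonotope.
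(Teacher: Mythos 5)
Your proof is correct, but it reaches the $d$ vector by a genuinely different and more elementary route than the paper. You write out both H-representations from Theorem~\ref{thm:hZonotope} (they share the normal matrix $C$ because the generator matrix is shared), use the fact that $\{x : Cx\le d_o\}\cap\{x : Cx\le d_t\}=\{x : Cx\le \min(d_o,d_t)\}$ with a component-wise minimum, and evaluate that minimum via $\min(a+b,\,a-b)=a-|b|$; your sign check on the $-C^+$ block is right, since $\min\bigl(-C_i^+(c-h),\,-C_i^+(c+h)\bigr)=-C_i^+c-|C_i^+h|$, which gives $d_i^-=-C_i^+c+\Delta d_i-|C_i^+h|$ as claimed. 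The paper instead argues by duality with the Minkowski difference: it identifies $\mathcal{Z}_o\cap\mathcal{Z}_t$ with $(c,G)\ominus(0_n,h)$ via \eqref{eq:infiniteIntersectionDefinition}, observes that Minkowski-adding $(0_n,h)$ back must recover $(c,G)$ and acts like appending $h$ as an additional generator (contributing $+|C^+h|$ to $\Delta d$), and concludes that the offsets must therefore be reduced by exactly $\Delta d_{trans}=|C^+h|$. Your version buys a self-contained, purely computational derivation that does not lean on the compensation argument; the paper's version buys a shorter proof that foreshadows the recursive use of the lemma in Theorem~\ref{thm:hMinkowskiDifference}. One minor caveat: your closing claim that the redundancy analysis carries over \emph{verbatim} is slightly too strong --- after intersecting, some of the $2\nu$ halfspaces may well become redundant (this is exactly what Sec.~\ref{sec:generatorRemoval} later exploits) --- but since the lemma only asserts a valid H-representation rather than a minimal one, this does not affect correctness.
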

\begin{proof}
 Since the translated zonotope $\mathcal{Z}_t$ has the same generators as the original zonotope $\mathcal{Z}_o$, both halfspace representations of $\mathcal{Z}_t$ and $\mathcal{Z}_o$ will have the same normal vectors, see Theorem~\ref{thm:hZonotope}. For each normal vector $C_i^+$, one of the corresponding halfspaces from $\mathcal{Z}_o$ or $\mathcal{Z}_t$ is a subset of the other one as shown in Fig.~\ref{fig:redundantGenerators}. This, of course, is analogous for $C_i^-$. Consequently, the number of required normal vectors for the intersection is unchanged compared to Theorem~\ref{thm:hZonotope} as also illustrated in Fig.~\ref{fig:redundantGenerators}. 
 
However, the intersection results in a subset of $\mathcal{Z}_o$, which is equivalent in reducing the values of $d_i^+$ and $d_i^-$ by $\Delta d_{trans,i}$. From the duality of intersection of translated sets and Minkowski difference (see \eqref{eq:infiniteIntersectionDefinition}) and after introducing $0_n$ as the $n$-dimensional vector of zeros, we have that
 \begin{equation}\label{eq:auxiliaryEquationTranslatedZonotope}
  (0_n,h) \oplus (\mathcal{Z}_o \cap \mathcal{Z}_t) = (0_n,h) \oplus \Big(\underbrace{(c-h,G) \cap (c+h,G)}_{(c,G) \ominus (0_n,h)}\Big) = (c,G).
 \end{equation}
 The Minkowski addition of $(0_n,h)$ in \eqref{eq:auxiliaryEquationTranslatedZonotope} can be seen as an additional generator so that $\Delta d = \sum_{\upsilon=1}^{p} |{C^+}\, g\^{\upsilon}| + |{C^+}\, h|$ in \eqref{eq:translatedZonotope}. To compensate for the change in $\Delta d$, we choose $\Delta d_{trans} = |{C^+} h|$ in \eqref{eq:translatedZonotope}.
\end{proof}

\begin{figure}[h!tb]
  \centering
  \footnotesize
  \psfrag{a}[c][c]{$C_1^+$}
  \psfrag{b}[c][c]{$-C_1^+$}
	\psfrag{c}[c][c]{$c$}
	\psfrag{d}[c][c]{$\mathcal{Z}_o$}
	\psfrag{e}[c][c]{$\mathcal{Z}_t$}
  \psfrag{f}[c][c]{$\mathcal{Z}_o \cap \mathcal{Z}_t$}
	\psfrag{h}[c][c]{$h$}
    \includegraphics[width=0.8\columnwidth]{./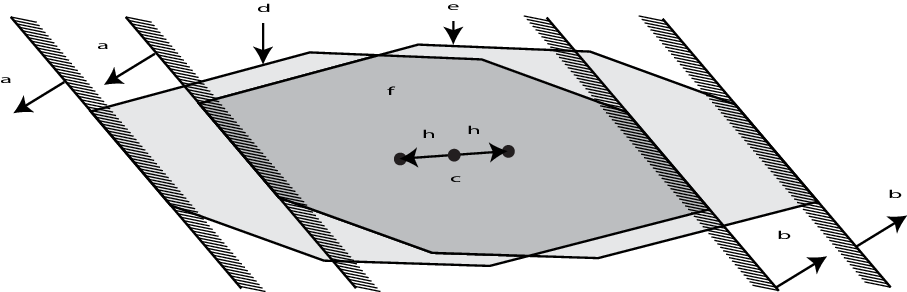}
  \caption{Either the halfspace belonging to $C_1^+$ or $C_1^-$ is redundant for $\mathcal{Z}_o$ when computing the interaction $\mathcal{Z}_o \cap \mathcal{Z}_t$. The same applies to $\mathcal{Z}_t$}
  \label{fig:redundantGenerators}
\end{figure}
%
The above lemma is used to compute the halfspace representation of $\mathcal{Z}_m\ominus \mathcal{Z}_s$.

\begin{theorem}[H-Representation of the Minkowski Difference] \label{thm:hMinkowskiDifference}
Given are the two zonotopes $\mathcal{Z}_m = (c\^m,G\^m)$ and $\mathcal{Z}_s = (c\^s,G\^s)$. A halfspace representation $C\, x\leq d$ of the Minkowski difference $\mathcal{Z}_m \ominus \mathcal{Z}_s$ has an identical $C$ matrix as Theorem~\ref{thm:hZonotope}, but a changed $d$ vector:
\begin{gather} 
 d=\begin{bmatrix} d^+ \\ d^-\end{bmatrix} = \begin{bmatrix} {C^+}\, (c\^m -c\^s) + \Delta d - \Delta d_{trans} \\ -{C^+}\, (c\^m -c\^s) + \Delta d - \Delta d_{trans} \end{bmatrix}, \notag \\
	\Delta d = \sum_{\upsilon=1}^{p_m} |{C^+}\, g\^{m,\upsilon}|, \quad \Delta d_{trans} = \sum_{\upsilon=1}^{p_s} |{C^+}\, g\^{s,\upsilon}|. \label{eq:deltaD}
\end{gather}
The computational complexity is $\O(\binom{p}{n-1}\big( n^4 + n(p_m+p_s))\big)$.
\end{theorem}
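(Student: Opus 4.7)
The plan is to apply Lemma~\ref{thm:hZonotopeIntersection} iteratively along the recursion provided by Theorem~\ref{thm:minkowskiDifferenceFromGenerators}. The first step $\mathcal{Z}_{int}\^{1} = \mathcal{Z}_m - c\^s = (c\^m - c\^s, G\^m)$ is itself a zonotope; by Theorem~\ref{thm:hZonotope} its H-representation has exactly the matrix $C$ claimed in the statement, with $d\^+ = C^+ (c\^m - c\^s) + \Delta d$ and $\Delta d = \sum_{\upsilon=1}^{p_m} |C^+ g\^{m,\upsilon}|$. So already after this initial translation, everything in \eqref{eq:deltaD} is in place except for the subtracted $\Delta d_{trans}$ term.

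I would then argue by induction on $i$ that each recursion step $\mathcal{Z}_{int}\^{i+1} = (\mathcal{Z}_{int}\^{i} + g\^{s,i}) \cap (\mathcal{Z}_{int}\^{i} - g\^{s,i})$ leaves the matrix $C$ unchanged and subtracts $|C^+ g\^{s,i}|$ from every entry of $d\^+$ and $d\^-$. For $i = 1$ this is exactly Lemma~\ref{thm:hZonotopeIntersection} with $h = g\^{s,1}$. Summing over $i = 1, \ldots, p_s$ yields the cumulative reduction $\Delta d_{trans} = \sum_{\upsilon=1}^{p_s} |C^+ g\^{s,\upsilon}|$, which matches \eqref{eq:deltaD}.

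The main obstacle is that Lemma~\ref{thm:hZonotopeIntersection} is stated only for two translated copies of a \emph{single} zonotope, whereas the intermediate set $\mathcal{Z}_{int}\^{i}$ is not, in general, a zonotope for $i \geq 2$. I would lift this obstacle by observing that the lemma's proof relies on only two facts about an H-representation: translation by $h$ shifts each halfspace bound by $C_k^+ h$, and intersecting the two translated copies keeps the tighter bound, reducing $d_k^\pm$ by $|C_k^+ h|$. Neither fact uses zonotope-specific structure beyond the fact that both sets being intersected share the normal matrix $C$ inherited from Theorem~\ref{thm:hZonotope}. Stating this mild generalization of the lemma lets the induction proceed at every step, so the $C$ matrix never changes and the accumulated decrement is additive in $|C^+ g\^{s,i}|$.

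As an independent cross-check, I would verify \eqref{eq:deltaD} via the support-function identity $\{x : Cx \leq d\} \ominus \mathcal{S} = \{x : Cx \leq d - \rho_{\mathcal{S}}(C^T)\}$, valid for any set $\mathcal{S}$ when $\rho_\mathcal{S}$ is applied row-wise. For $\mathcal{S} = \mathcal{Z}_s$ one computes $\rho_{\mathcal{Z}_s}(C_i^+) = C_i^+ c\^s + \sum_{\upsilon=1}^{p_s} |C_i^+ g\^{s,\upsilon}|$, which reproduces the claimed formula exactly and confirms that no facets beyond those inherited from $\mathcal{Z}_m$ need to be added to describe the Minkowski difference.
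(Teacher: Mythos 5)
Your proof follows the same route as the paper's: unroll the recursion of Theorem~\ref{thm:minkowskiDifferenceFromGenerators} and apply Lemma~\ref{thm:hZonotopeIntersection} once per generator of $\mathcal{Z}_s$, accumulating the decrements $|C^+ g^{(s,i)}|$ into $\Delta d_{trans}$. The one place where you go beyond the paper is in fact a repair it needs: the paper simply invokes ``repeated application of Lemma~\ref{thm:hZonotopeIntersection}'', but that lemma is stated for two translated copies of a \emph{zonotope}, and by Proposition~\ref{thm:MinkowskiDifferenceClosedness} the intermediate sets $\mathcal{Z}_{int}^{(i)}$ for $i\geq 2$ are generally not zonotopes. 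Your observation that the lemma only needs the two intersected sets to share the normal matrix $C$ is exactly right, and in that generality the step is immediate: if $\mathcal{P}=\{x: Cx\leq d\}$, then $(\mathcal{P}+h)\cap(\mathcal{P}-h)=\{x: Cx\leq d-|Ch|\}$ componentwise, with no appeal to central symmetry or to the redundancy argument in the lemma's proof. Your support-function cross-check is more than a cross-check: since $\{x: Cx\leq d\}\ominus\mathcal{S}=\{x: C_i x\leq d_i-\rho_{\mathcal{S}}(C_i^T)\ \forall i\}$ holds for any H-polyhedron and any set $\mathcal{S}$, and $\rho_{\mathcal{Z}_s}\bigl((C_i^+)^T\bigr)=C_i^+ c^{(s)}+\sum_{\upsilon}|C_i^+ g^{(s,\upsilon)}|$, this gives a self-contained direct proof that bypasses the induction entirely and explains at once why the normal matrix of Theorem~\ref{thm:hZonotope} suffices and only the offsets change. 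Either version of your argument is correct; the support-function one is arguably cleaner than the paper's.
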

\begin{proof}
 As shown in Theorem~\ref{thm:minkowskiDifferenceFromGenerators}, the Minkowski difference can be computed from finitely many translations of $\mathcal{Z}_m$ by the generators of $\mathcal{Z}_s$. The halfspace representation of intersecting $\mathcal{Z}_m$ with one translated version of itself has been shown in Lemma \ref{thm:hZonotopeIntersection}. Repeated application of Lemma \ref{thm:hZonotopeIntersection} results in the summation of values of $\Delta d_{trans}$ to $\Delta d_{trans} = \sum_{\upsilon=1}^{p_s} |{C^+}\, g\^{s,\upsilon}|$, thus proving the theorem.
 
 The computational complexity of Theorem~\ref{thm:hMinkowskiDifference} is obtained as follows:
We only consider the number of required binary operations; the computational effort of unary operations like
concatenations of lists can be safely neglected. The computation of the determinants is $\O((n-1)^3)=\O(n^3)$ when applying LU decomposition, and thus the computation of the $n$-dimensional cross product is $n\cdot\O(n^3)=\O(n^4)$. As $\binom{p}{n-1}$ non-parallel hyperplanes have to be computed (see Theorem~\ref{thm:hZonotope}), the computational complexity for $C^+$ is $\O(\binom{p}{n-1} n^4)$. Finally, we have $1+p_m+p_s$ multiplications of $C^+$ with generators and one difference of centers. Each multiplication has a complexity of $\O(\binom{p}{n-1}n)$ so that we obtain $\O(\binom{p}{n-1}n(p_m+p_s))$ for $d$. Thus, the overall computational complexity is $\O(\binom{p}{n-1}\big( n^4 + n(p_m+p_s))\big)$.
\end{proof}

Please note that although the result of Theorem~\ref{thm:hMinkowskiDifference} is centrally symmetric, the result is not a zonotope in general. While each zonotope is centrally symmetric, the opposite does not hold. We demonstrate this by a counterexample.
\begin{proposition}[Zonotopes are not closed under Minkowski difference] \label{thm:MinkowskiDifferenceClosedness}
 Zonotopes are not closed under Minkowski difference, i.e., given the zonotopes $\mathcal{Z}_m$, $\mathcal{Z}_s$, the set $\mathcal{Z}_d = \mathcal{Z}_m\ominus \mathcal{Z}_s$ is not a zonotope. 
 
\end{proposition}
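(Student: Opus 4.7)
The plan is to produce a concrete counterexample in $\mathbb{R}^3$ and exhibit a triangular facet of the resulting difference, which rules out a zonotope representation since, by the third interpretation of zonotopes recalled in Section~\ref{sec:preliminariesAndProblemStatement}, every face of a zonotope must be centrally symmetric.

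Specifically, I would take $\mathcal{Z}_m = (0, e_1, e_2, e_3, (1,1,1)^T)$ as the minuend and the line segment $\mathcal{Z}_s = (0, (1,1,0)^T)$ as the subtrahend. Applying Theorem~\ref{thm:hMinkowskiDifference}, with the facet normals of $\mathcal{Z}_m$ obtained via Theorem~\ref{thm:hZonotope} and Definition~\ref{def:nDimCrossProduct}, yields an explicit halfspace description $Cx \le d$ of $\mathcal{Z}_d = \mathcal{Z}_m \ominus \mathcal{Z}_s$ consisting of the six pairs of inequalities $|x_1| \le 1$, $|x_2| \le 1$, $|x_3| \le 2$, $|x_3 - x_2| \le 1$, $|x_1 - x_3| \le 1$, and $|x_2 - x_1| \le 2$. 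The third and sixth pairs are redundant by the triangle inequality applied to the first, second, and fourth pairs, so the irredundant description has four antipodal normal directions and eight facets.

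Next I would focus on one facet, for example the one in the hyperplane $x_1 = 1$. Substituting $x_1 = 1$ into the remaining halfspaces reduces the facet to a two-dimensional region in the $(x_2, x_3)$-plane bounded by $x_2 \in [-1, 1]$, $x_3 \in [0, 2]$, and $x_3 - x_2 \in [-1, 1]$. Enumerating vertices directly shows that this region is a triangle whose three corners in $\mathbb{R}^3$ are $(1, 1, 0)$, $(1, 1, 2)$, and $(1, -1, 0)$. Since a triangle is not centrally symmetric and every facet of a zonotope must be, $\mathcal{Z}_d$ is not a zonotope, which proves the proposition.

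The main potential obstacle is ensuring that the redundancy analysis is sound and that the triangular region is genuinely a two-dimensional facet of $\mathcal{Z}_d$ rather than a lower-dimensional face or an artefact of the starting halfspace description. This reduces to a short finite check on the explicit inequalities produced by Theorem~\ref{thm:hMinkowskiDifference}; once carried out, the contradiction with the centrally symmetric face structure of zonotopes is immediate.
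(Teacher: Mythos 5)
Your proposal is correct, and it rests on the same core idea as the paper's proof: exhibit a concrete pair of zonotopes whose difference has a facet that is not centrally symmetric, contradicting the third interpretation of zonotopes. The execution differs in two respects. First, the paper uses a minuend and subtrahend each with four generators in $\mathbb{R}^3$ and establishes the asymmetry of the faces by inspecting a plot computed with CORA and the MPT toolbox; your example takes the subtrahend to be a single line segment $\mathcal{Z}_s = (0,(1,1,0)^T)$, which keeps the entire argument computable by hand. Second, you actually carry the verification through symbolically: the six normal directions and offsets you list follow correctly from Theorem~\ref{thm:hZonotope} and Theorem~\ref{thm:hMinkowskiDifference} (for a one-generator subtrahend the halfspace description $Cx \le d - |Ch|$ is exactly $(\mathcal{Z}_m - h)\cap(\mathcal{Z}_m+h) = \mathcal{Z}_m \ominus \mathcal{Z}_s$), the redundancy of $|x_3|\le 2$ and $|x_2-x_1|\le 2$ is immediate from the triangle inequality, the resulting set contains the origin in its interior so it is full-dimensional and the surviving hyperplanes genuinely support facets, and the facet in $x_1=1$ is indeed the triangle with vertices $(1,-1,0)$, $(1,1,0)$, $(1,1,2)$. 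This makes your version a fully self-contained, checkable proof, whereas the paper's argument delegates the decisive step (non-symmetry of a face) to a figure produced by software; what the paper's choice buys in return is a visually striking example in which both operands are genuinely three-dimensional zonotopes of order above one, rather than a degenerate line-segment subtrahend.
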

\begin{proof}
 We prove the proposition by a counterexample with $\mathcal{Z}_m = (\mathbf{0}, G_m)$ and $\mathcal{Z}_s = (\mathbf{0}, G_s)$, where $\mathbf{0}$ is a vector of zeros and 
 \begin{equation*}
   G_m = \begin{bmatrix}
          1 & 1 & 0 & 0 \\
          1 & 0 & 1 & 0 \\
          1 & 0 & 0 & 1 
         \end{bmatrix}, \quad
   G_s = \frac{1}{3}\begin{bmatrix}
          -1 & 1 & 0 & 0 \\
          1 & 0 & 1 & 0 \\
          1 & 0 & 0 & 1 
         \end{bmatrix}.
 \end{equation*}
 As one can see in the plot of $\mathcal{Z}_m\ominus \mathcal{Z}_s$ in Fig.~\ref{fig:counterExample} obtained by CORA \cite{Althoff2015a} and the MPT toolbox \cite{Herceg2013}, not all faces are centrally symmetric as it has to be for zonotopes.
\end{proof}

\begin{figure}[htb]
  \centering
  \footnotesize
  \psfrag{a}[c][c]{$x_1$}						
  \psfrag{b}[c][c]{$x_2$}	
  \psfrag{c}[c][c]{$x_3$}							
     \includegraphics[width=0.5\columnwidth]{./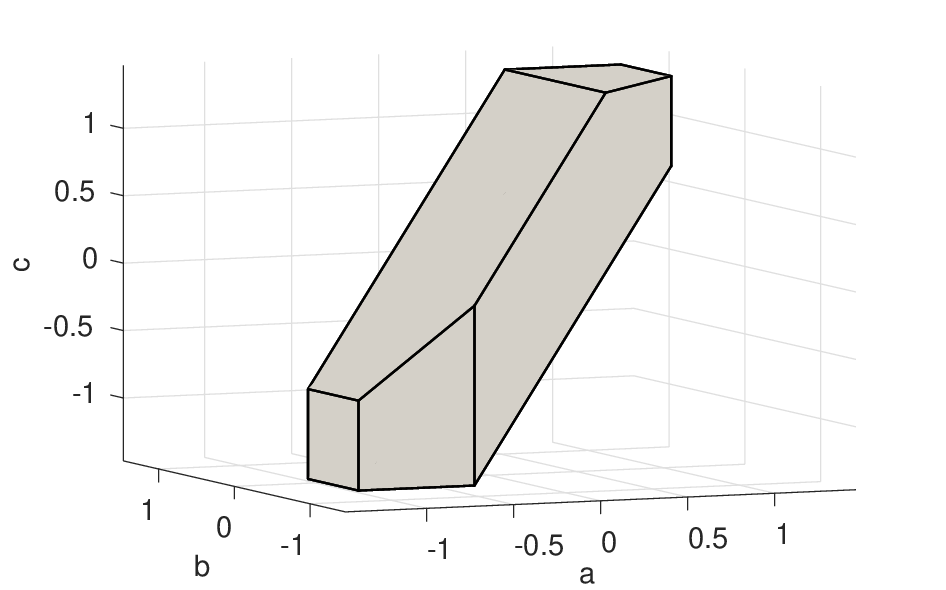}					
      \caption{The Minkowski difference of two zonotopes that is not a zonotope in general.}
      \label{fig:counterExample}
\end{figure}
In the next section, we use the halfspace representation of the Minkowski difference to obtain under-approximations and over-approximations in generator representation.

\section{Generator Removal and Contraction} \label{sec:generatorAdjustment}

As shown in Sec. \ref{sec:preliminariesAndProblemStatement}, it is possible that not all generators of the minuend $\mathcal{Z}_m$ are preserved---in the extreme case, the result is the empty set. We present how to detect the generators that can be removed, followed by approaches that contract the remaining generators so that the Minkowski difference is under-approximated or over-approximated.  

When the minuend and the subtrahend are both degenerate sets, we first compute an orthonormal basis of the column space (aka range) of the generator matrix of the minuend $(c\^m, G\^m)$. The orthonormal basis of the column space are the rows in $U$ of the singular value decomposition $G\^m = U\Sigma V^T$ whose corresponding singular values in $\Sigma$ are non-zero; these rows are stored in the matrix $P$. If $P$ is identical for the minuend and the subtrahend, the Minkowski difference is computed as $P^T\big((P\mathcal{Z}_m) \ominus (P\mathcal{Z}_s)\big)$.

\subsection{Generator Removal} \label{sec:generatorRemoval}

To remove redundant generators, we store the normal vectors of the H-representation of the minuend $\mathcal{Z}_m$ together with the indices of generators $g\^{m,i}$ ($i=1,\ldots,p_m$) spanning the corresponding facets. 
Next, the halfspace representation of the intersection is computed as presented in Theorem \ref{thm:hMinkowskiDifference} and the minimum H-representation is obtained using linear programming \cite[Sec. 2.21]{Fukuda2004b}. In general, the removal of redundant halfspaces causes the Minkowski difference to become a polytope. Exceptions exist and are discussed in Sec.~\ref{sec:exactSolution}. We denote the matrix of normal vectors of redundant halfspaces by $C^{red}$ and the one of non-redundant halfspaces by $\hat{C}$. The corresponding generator indices of the $\th{j}$ redundant halfspace $C^{red}_j$ are $[\alpha_j,\ldots, \kappa_j]$. We also introduce a vector $\mathtt{ind}\^j$ indicating whether the generator contributes to the halfspace:
\begin{equation*}
 \forall i=1,\ldots,p_m: \quad \mathtt{ind}_i\^j = \begin{cases}
                    1, \text{ if } i \in \{\alpha_j,\ldots, \kappa_j\}, \\
                    0, \text{ otherwise}. 
                   \end{cases}
\end{equation*}
The generators, which are no longer required, are selected based on $\mathtt{ind}\^j$ by the following proposition. The matrix of remaining generators is denoted by $\hat{G}\^m$.

\begin{proposition}[Generator removal]
 The $\th{i}$ generator of the minuend $\mathcal{Z}_m$ with $p_m$ generators in $n$-dimensional space is no longer required for the difference $\mathcal{Z}_m \ominus \mathcal{Z}_s$ if and only if
 \begin{equation*}
  \sum_{j=1}^{\varsigma} \mathtt{ind}_i\^j = 2 \binom{p_m-1}{n-2},
 \end{equation*}
 where $\varsigma$ is the number of redundant halfspaces. The computational complexity with respect to the dimension $n$ is $\O(\binom{p_m}{n-1} p_m^2)$.
\end{proposition}
\begin{proof}
 Each facet is spanned by $n-1$ generators. By fixing one generator for a facet, one can still select $n-2$ generators from $p_m-1$ generators, such that each generator spans $2 \binom{p_m-1}{n-2}$ facets. If a generator is removed, those $2 \binom{p_m-1}{n-2}$ facets no longer exist, such that a generator can only be removed if and only if all $2 \binom{p_m-1}{n-2}$ facets to which it contributes are redundant.
 
 The removal of a single generator only requires the summation of $\varsigma$ $p_m$-dimensional vectors and one equality check so that we have $\O(\varsigma \, p_m)$. Since the highest possible value of $\varsigma$ is $\varsigma=2\binom{p_m}{n-1}$ (see Theorem~\ref{thm:hZonotope}), the complexity becomes $\O(\binom{p_m}{n-1} p_m)$ and $\O(\binom{p_m}{n-1} p_m^2)$ for all generators.
\end{proof}
Depending on the application, one can also skip the generator removal to save computation time. However, without generator removal, the order of the resulting zonotopes is unnecessarily large.

\subsection{Generator Contraction Using Linear Programming} \label{sec:generatorContraction}

So far we have removed the generators that are not required anymore. It remains to adjust the lengths of the remaining generators so that the spanned zonotope under-approximates or over-approximates the Minkowski difference $\mathcal{Z}_m\ominus \mathcal{Z}_s$. It is obvious that a polytope is a subset of another polytope with identical $\hat{C}$ matrix, if and only if the $\hat{d}$ vector is smaller in each dimension:
\begin{subequations} \label{eq:polytopeEnclosure}
\begin{align}
  & \{x | \hat{C}x \leq \hat{d}'\} \subseteq \{x | \hat{C}x \leq \hat{d}\} \\
  \Longleftrightarrow & \forall i: \hat{d}_i' \leq \hat{d}_i \label{eq:subsetCondition} \\
  \implies & \bigg \{x \bigg| \begin{bmatrix} \hat{C} \\ C^{red} \end{bmatrix} x \leq \begin{bmatrix} \hat{d}' \\ d^{red}\end{bmatrix} \bigg\} \subseteq \{x | \hat{C}x \leq \hat{d}\}. \label{eq:redundantHalfspaces}
\end{align}
\end{subequations}
This result is used to compute the following under-approximation and over-approximation by stretching the generators of the minuend using the vector of stretching factors $\mu = \begin{bmatrix} \mu_1, & \ldots, & \mu_{p_m} \end{bmatrix}$. To apply \eqref{eq:polytopeEnclosure}, we first extract the $d^+$ vector of the Minkowski difference from Theorem~\ref{thm:hMinkowskiDifference}:
 \begin{equation} \label{eq:dPlusOne}
 \begin{split}
  d^+ =& C^+\, (c\^m -c\^s) + \underbrace{\sum_{\upsilon=1}^{p_m} |C^+\, g\^{m,\upsilon}|}_{=: \Delta d} - \underbrace{\sum_{\upsilon=1}^{p_s} |C^+\, g\^{s,\upsilon}|}_{=:\Delta d_{trans}}.
 \end{split}
 \end{equation}
The $d^+(\mu)$ values of the halfspace representation of $\mathcal{Z}_m$ for a given stretching vector $\mu$ are (see Theorem~\ref{thm:hZonotope}):
\begin{equation} \label{eq:dPlusTwo}
 \begin{split}
  d^+(\mu) =& C^+\, (c\^m -c\^s) + \sum_{\upsilon=1}^{\hat{p}_m} |C^+\, \hat{g}\^{m,\upsilon}| \mu_\upsilon = C^+\, (c\^m -c\^s) + |C^+\, \hat{G}\^{m}| \mu.
 \end{split}
\end{equation}
Now, we can enforce that $d^+(\mu)$ of the stretched minuend  is smaller than $d^+$ using the following theorem.

\begin{theorem}[Under-Approximative G-Representation of the Minkowski Difference] \label{thm:gMinkowskiDifference_under}
Given are the two zonotopes $\mathcal{Z}_m = (c\^m,G\^m)$ and $\mathcal{Z}_s = (c\^s,G\^s)$. An under-approximative generator representation $(\hat{c},\hat{G})$ of $\mathcal{Z}_m \ominus \mathcal{Z}_s$ is:
\begin{equation*}
 \begin{split}
  \hat{c} &= c\^m - c\^s, \quad 
  \hat{G} = \hat{G}\^m \mu^*, 
 \end{split}
\end{equation*}
where $\mu^*$ is the solution of the linear program
\begin{equation}  \label{eq:linProgram_under}
\begin{split}
 & \mu^* = \underset{\mu}{\mathtt{argmax}} \quad \zeta^T \mu \\
 \text{subject to } & \hat{A} \mu \leq \hat{b} \\
 & \mu \geq \mathbf{0},
\end{split}
\end{equation}
in which $\zeta$ is a user-specified vector, $\hat{A} = |C^+\, \hat{G}\^{m}|$, and $\hat{b} = \Delta d - \Delta d_{trans}$. The values of $C^+$, $\Delta d$, and $\Delta {d}_{trans}$ are computed as in Theorem~\ref{thm:hMinkowskiDifference}. The result also holds when removing redundant halfspaces.
\end{theorem}

\begin{proof}
 The normal vectors of the Minkowski difference are those of the minuend (see Theorem~\ref{thm:hMinkowskiDifference}). Thus, it suffices according to \eqref{eq:polytopeEnclosure} that the $d^+(\mu)$ vector of the stretched minuend in \eqref{eq:dPlusTwo} is smaller than $d^+$ of the exact solution in \eqref{eq:dPlusOne} to obtain an under-approximation:
\begin{equation} 
 \begin{split}
  C^+\, (c\^m -c\^s) + |C^+\, \hat{G}\^{m}| \mu &\leq C^+\, (c\^m -c\^s) + \Delta d - \Delta d_{trans} \\
  \Leftrightarrow \underbrace{|C^+\, \hat{G}\^{m}|}_{\hat{A}} \mu &\leq \underbrace{\Delta d - \Delta d_{trans}}_{\hat{b}}.
 \end{split}
\end{equation}
In order to obtain large positive values of $\mu$ and thus a tight under-approximation, we use the linear program in \eqref{eq:linProgram_under}.
\end{proof}

The vector $\zeta$ can be freely chosen. As a possible heuristic, we use $\zeta_i=\|g\^{m,i}\|_2$ so that the stretching factors of long generators receive a larger weight.

As shown in \eqref{eq:redundantHalfspaces}, the under-approximation in Theorem~\ref{thm:gMinkowskiDifference_under} also holds if redundant halfspaces specified by $C^{red}$ and $d^{red}$ are not removed. However, \eqref{eq:redundantHalfspaces} is only applicable for under-approximations so that for over-approximations, we can only use \eqref{eq:subsetCondition}, which implies that the halfspaces of the minuend and the Minkowski difference have to be equal. Thus, one can only remove those halfspaces that are not generated by Theorem~\ref{thm:hMinkowskiDifference} after removing generators according to Sec.~\ref{sec:generatorRemoval}. 
The over-approximation is obtained by computing the smallest values of $\mu$, such that $\hat{A} \mu \geq \hat{b}$ (size of $\hat{A}$ and $\hat{b}$ may differ from under-approximative Minkowski difference due to a different number of removed halfspaces). To compensate for the effect that less halfspaces can be removed compared to the under-approximation, we move the redundant halfspaces as much as possible inwards so that they touch the exact Minkowski difference, as shown next. 


\begin{theorem}[Over-Approximative G-Representation of the Minkowski Difference] \label{thm:gMinkowskiDifference_over}
Given are the two zonotopes $\mathcal{Z}_m = (c\^m,G\^m)$ and $\mathcal{Z}_s = (c\^s,G\^s)$. An over-approximative generator representation $(\tilde{c},\tilde{G})$ of $\mathcal{Z}_m \ominus \mathcal{Z}_s$ is:
\begin{equation*} 
 \begin{split}
  \tilde{c} &= c\^m - c\^s, \quad 
  \tilde{G} = \hat{G}\^m \mu^*, 
 \end{split}
\end{equation*}
where $\mu^*$ is the solution of the linear program
\begin{equation} \label{eq:linProgram_over}
\begin{split}
 & \mu^* = \underset{\mu}{\mathtt{argmax}} \quad -\zeta^T \mu \\
 \text{subject to } & - \hat{A} \mu \leq - \tilde{b} \\
 & \mu \geq \mathbf{0},
\end{split}
\end{equation}
in which $\zeta$ is a user-specified vector, $\hat{A} = |C^+\, \hat{G}\^{m}|$ (identical to Theorem~\ref{thm:gMinkowskiDifference_under}), and $\tilde{b}$ is obtained by another linear program:
\begin{equation} \label{eq:shrinkPolytope}
\begin{split}
 & \tilde{b}_i = \underset{\mu}{\mathtt{argmax}} \quad C^+_i \mu \\
 \text{subject to } & C^+ \mu \leq \Delta d - \Delta d_{trans} \\
 & \mu \geq \mathbf{0},
\end{split}
\end{equation}
where the values of $C^+$, $\Delta d$, and $\Delta d_{trans}$ are computed as in Theorem~\ref{thm:hMinkowskiDifference}.
\end{theorem}

\begin{proof}
 The structure of the proof is similar to that of Theorem~\ref{thm:gMinkowskiDifference_under}. Because the normal vectors of the Minkowski difference are those of the minuend (see Theorem~\ref{thm:hMinkowskiDifference}), it suffices according to \eqref{eq:polytopeEnclosure} that the $d^+(\mu)$ vector of the stretched minuend in \eqref{eq:dPlusTwo} is greater than $d$ in Theorem~\ref{thm:hMinkowskiDifference}:
\begin{equation} 
 \begin{split}
  C^+\, (c\^m -c\^s) + |C^+\, \hat{G}\^{m}| \mu &\geq C^+\, (c\^m -c\^s) + \Delta d - \Delta d_{trans} \\
  \Leftrightarrow \underbrace{|C^+\, \hat{G}\^{m}|}_{\hat{A}} \mu &\geq \underbrace{\Delta d - \Delta d_{trans}}_{\hat{b}}.
 \end{split}
\end{equation}
In order to obtain small values of $\mu$ and thus a tight over-approximation, we use the linear program in \eqref{eq:linProgram_over}.

As previously mentioned, we require that the number of halfspace of the minuend is equal to that of the Minkowski difference to apply \eqref{eq:polytopeEnclosure}. Thus, we keep all halfspaces, but reduce the conservatism by finding $\tilde{b} \leq \hat{b}$ (comparison is performed elementwise) through the linear program in \eqref{eq:shrinkPolytope}. This linear program determines the furthest one can travel in the direction of the corresponding halfspace without leaving the minuend, which returns how close a redundant halfspace can be moved to the minuend; non-redundant halfspace are obviously unaffected.
\end{proof}
The tightness of the under-approximation and over-approximation is evaluated in Sec.~\ref{sec:numericalExperiments}.



\subsection{Reducing the Computational Complexity} \label{sec:reductionMethod}

In order to reduce the number of required halfspaces, one can over-approximate the minuend to obtain an over-approximative Minkowski difference \cite{Kopetzki2017, Yang2018}. However, to under-approximate the minuend, no similarly powerful methods compared to over-approximative methods exist. To address this problem, we present a novel reduction technique for the under-approximation of the Minkowski difference. Instead of reducing the number of generators of the minuend before the computation of the Minkowski difference as proposed for the over-approximation, we split the minuend into a part that encloses the subtrahend and a remainder. Then the Minkowski difference is computed without the remainder and the remainder is later added via Minkowski sum. To show that this approach is under-approximative, we first provide the following intermediate result.

\begin{proposition} \label{prop:setEnclosure}
 Given that $\mathcal{C} \subseteq \mathcal{A}$, the following enclosure holds:
 \begin{equation*}
  \{a+b | a \oplus \mathcal{C} \subseteq \mathcal{A}, b \in \mathcal{B}\} \subseteq \{a + b| a+b \oplus \mathcal{C} \subseteq \mathcal{A} \oplus \mathcal{B}\}.
 \end{equation*}
\end{proposition}
\begin{proof}
 Since $\mathcal{C} \subseteq \mathcal{A}$, for all $b\in\mathcal{B}$ there exists an $a$ so that
 \begin{equation*}
 \begin{split}
   a \oplus \mathcal{C} & \subseteq \mathcal{A} \\
  \Leftrightarrow  a + b \oplus \mathcal{C} & \subseteq \{\tilde{a} + b| \tilde{a} \in \mathcal{A}, b \in \mathcal{B}\} \\
  \Rightarrow  a + b \oplus \mathcal{C} &\subseteq \{\tilde{a} + \tilde{b}| \tilde{a} \in \mathcal{A}, \tilde{b} \in \mathcal{B}\} \\
  \Leftrightarrow  a + b \oplus \mathcal{C} & \subseteq \mathcal{A} \oplus \mathcal{B}
 \end{split}
 \end{equation*}
\end{proof}

As previously mentioned, we split the minuend into two zonotopes $\mathcal{Z}_m = \mathcal{Z}_{m,encl} \oplus \mathcal{Z}_{m, rem}$ so that $\mathcal{Z}_{m,encl} \supseteq \mathcal{Z}_s$. This is only possible if for some $\xi$, $\mathcal{Z}_m \supseteq \mathcal{Z}_s + \xi$. Otherwise, $\mathcal{Z}_m \ominus \mathcal{Z}_s$ is empty.

\begin{proposition} 
 For $\mathcal{Z}_{m,encl} \supseteq \mathcal{Z}_s$ it holds that
 \begin{equation*}
  (\mathcal{Z}_{m,encl} \ominus \mathcal{Z}_s) \oplus \mathcal{Z}_{m, rem} \subseteq (\mathcal{Z}_{m,encl} \oplus \mathcal{Z}_{m, rem}) \ominus \mathcal{Z}_s
 \end{equation*}
\end{proposition}
\begin{proof}
 Starting with the definition of the Minkowski difference, we have the following relations:
 \begin{equation*} 
 \begin{split}
  & (\mathcal{Z}_{m,encl} \oplus \mathcal{Z}_{m, rem}) \ominus \mathcal{Z}_s \\
  \overset{\eqref{eq:MinkowskiDifference}}{=}&  \{x \in \mathbb{R}^n | x \oplus \mathcal{Z}_s \subseteq (\mathcal{Z}_{m,encl} \oplus \mathcal{Z}_{m, rem})\} \\
  = & \{(x_{encl} + x_{rem}) \in \mathbb{R}^n | (x_{encl} + x_{rem}) \oplus \mathcal{Z}_s \subseteq (\mathcal{Z}_{m,encl} \oplus \mathcal{Z}_{m, rem})\} \\
  \overset{Prop.~\ref{prop:setEnclosure}}{\supseteq} & \{(x_{encl} + x_{rem}) \in \mathbb{R}^n | x_{encl} \oplus \mathcal{Z}_s \subseteq \mathcal{Z}_{m,encl}, \,\, x_{rem} \in \mathcal{Z}_{m, rem}\} \\
  = & \{x_{encl} \in \mathbb{R}^n | x_{encl} \oplus \mathcal{Z}_s \subseteq \mathcal{Z}_{m,encl}\} \oplus \mathcal{Z}_{m, rem} \\
  \overset{\eqref{eq:MinkowskiDifference}}{=} & (\mathcal{Z}_{m,encl} \ominus \mathcal{Z}_s) \oplus \mathcal{Z}_{m, rem}
 \end{split}
\end{equation*}
\end{proof}

In order to use as few generators as possible to form $\mathcal{Z}_{m,encl}$ so that $\mathcal{Z}_s$ is enclosed, we first sort the generators of $\mathcal{Z}_{m}$ by their length using the operator $\mathrm{sort}(\cdot)$ in Alg.~\ref{alg:generatorSelection}. To avoid checking the enclosure of $\mathcal{Z}_s$ too often, we compare the size of the interval hulls (i.e., the enclosing axis-aligned boxes) of $\mathcal{Z}_{m,encl}$ and $\mathcal{Z}_s$. The operator for returning interval hulls is $\mathrm{box}(\cdot)$ and the radius $r\in\mathbb{R}^n$ of the interval hulls is defined as $[c-r,c+r]$ and returned by $\mathrm{rad}(\cdot)$. As soon as the worst ratio $\max_i \frac{\mathrm{rad}(\mathrm{box}(\mathcal{Z}_s)_i)}{\mathrm{rad}(\mathrm{box}(\mathcal{Z}_{m,encl})_i)}$ is below a user-defined threshold $\underline{\mu}$, no more generators are added to $\mathcal{Z}_{m,encl}$. Afterwards, more generators are added in case $\mathcal{Z}_{m,encl} \not\supseteq \mathcal{Z}_s$. This can be done by binary search or simply by adding single generators as shown in Alg.~\ref{alg:generatorSelection}.

\begin{algorithm}
\caption{Generator Selection for $\mathcal{Z}_{m}$} \label{alg:generatorSelection}
\begin{algorithmic}[1]
	\Require $\mathcal{Z}_{m}=(c_m, G_m)$, $\mathcal{Z}_{s}=(c_s, G_s)$.
	\Ensure $\mathcal{Z}_{m,encl}$
	\State $G_{sort} \gets \mathrm{sort}(G_{m})$
	\State $r_{sel} \gets 0$
	\State $r_s \gets \mathrm{rad}(\mathrm{box}(\mathcal{Z}_{s}))$
	\State $\mu \gets \mathrm{inf}$
	\State $\xi \gets 0$
    \While{$\mu > \underline{\mu} \land \xi < n_m$}
        \State $\xi \gets \xi + 1$
        \State $r_{sel} \gets r_{sel} + |g_{sort}\^\xi|$
        \State $\mu \gets \max_i r_{s,i}/r_{sel,i}$
    \EndWhile
    \While{$\mathcal{Z}_s \supset (c_s, g_{sort}\^1, \ldots, g_{sort}\^\xi) \land \xi < n_m$}
        \State $\xi \gets \xi + 1$
    \EndWhile
    \State $\mathcal{Z}_{m,encl} \gets (c_m, g_{sort}\^1, \ldots, g_{sort}\^\xi)$
\end{algorithmic}
\end{algorithm}

Before evaluating the proposed approaches for under-approximating and over-approximating the Minkowski difference, we present special cases with exact solutions.

\section{Exact Solutions} \label{sec:exactSolution}

In this section, we present two special cases with exact solutions: zonotopes with aligned generators and two-dimensional zonotopes. Let us first present the case with aligned generators.

\begin{definition}[Aligned zonotopes]
 Given are two zonotopes $\mathcal{Z}_1=(c, g\^{1}, \ldots, g\^{p_1})$ and $\mathcal{Z}_2=(d, h\^{1}, \ldots, h\^{p_2})$. Without loss of generality, we assume that the generators within each of these zonotopes are not aligned, i.e., for $\mathcal{Z}_1$ it is not true that $\exists i, \, \exists j\neq i, \, \exists \gamma: \gamma \, g\^i = g\^j$, where $i,j \in \mathbb{N}$ and $\gamma \in \mathbb{R}$; this applies analogously to $\mathcal{Z}_2$. Otherwise, as mentioned above, one could adjust $g\^i := (\gamma + 1) g\^i$ and remove $g\^j$. If 
 \begin{equation*}
  \forall i \in \{1, \ldots p_2\} \, \exists j \in \{1, \ldots p_1\} \, \exists \gamma \in \mathbb{R}: \gamma \, g\^j = h\^i 
 \end{equation*}
 we say that the zonotope $\mathcal{Z}_1$ is aligned with the zonotope $\mathcal{Z}_2$. 
\end{definition}
If $\mathcal{Z}_1$ is aligned with $\mathcal{Z}_2$ and vice versa, we say that the zonotopes are mutually aligned.

\begin{proposition}[Minkowski difference for aligned generators]
 Given is a zonotope $\mathcal{Z}_m$ which is aligned with $\mathcal{Z}_s$. Without loss of generality, we re-order the generators so that $\forall i \in \{1, \ldots p_s\}$ we have that $\gamma_i g\^{m,i} = g\^{s,i}$, $\gamma_i \in \mathbb{R}$. The Minkowski difference can be computed exactly:
 \begin{equation}\label{eq:alignedDifference}
  \mathcal{Z}_m \ominus \mathcal{Z}_s = (c\^m - c\^s, g\^{m,1} - g\^{s,1}, \ldots, g\^{m,p_s} - g\^{s,p_s}, g\^{m,p_s+1}, \ldots, g\^{m,p_m}).
 \end{equation}
\end{proposition}
\begin{proof}
 The proof is straightforward. The Minkowski sum of $\mathcal{Z}_m$ and $\mathcal{Z}_s$ is obviously 
 \begin{equation}\label{eq:alignedSum}
  \mathcal{Z}_m \oplus \mathcal{Z}_s = (c\^m + c\^s, g\^{m,1} + g\^{s,1}, \ldots, g\^{m,p_s} + g\^{s,p_s}, g\^{m,p_s+1}, \ldots, g\^{m,p_m}).
 \end{equation}
 Thus, combining \eqref{eq:alignedDifference} and \eqref{eq:alignedSum} results in $(\mathcal{Z}_m \ominus \mathcal{Z}_s) \oplus \mathcal{Z}_s = \mathcal{Z}_m$, which shows that the Minkowski difference for a zonotope aligned with another one is exact.
\end{proof}

Next, we show that Minkowski difference is exact in the two-dimensional case. We assume that the redundant generators have been removed as presented in Sec.~\ref{sec:generatorRemoval}.

\begin{proposition}[Minkowski difference of two-dimensional zonotopes] \label{prop:two-dimensionalCase}
 The Minkowski difference of two-dimensional zonotopes is the zonotope
 \begin{equation*}
  \big(c\^m -c\^s, \hat{G}\^m |\hat{C}^+ \hat{G}\^{m}|^{-1} (\Delta \hat{d} - \Delta \hat{d}_{trans})\big).
 \end{equation*}
\end{proposition}
\begin{proof}
 To show that the result is exact, the $d$ vector of the Minkowski difference has to be exact (see Theorem~\ref{thm:hMinkowskiDifference}). After equating \eqref{eq:dPlusOne} and \eqref{eq:dPlusTwo} using non-redundant halfspaces and the reduced set of generators, we obtain 
 \begin{equation} \label{eq:equal_d}
 \begin{split}
  \hat{C}^+\, (c\^m -c\^s) + |\hat{C}^+\, \hat{G}\^{m}| \mu =& \hat{C}^+\, (c\^m -c\^s) + \Delta \hat{d} - \Delta \hat{d}_{trans} \\
  |\hat{C}^+\, \hat{G}\^{m}| \mu =& \Delta \hat{d} - \Delta \hat{d}_{trans}.
  \end{split}
  \end{equation}
 The above equation can be solved exactly since each generator corresponds to one normal vector in two dimensions. This can be easily shown by evaluating Theorem~\ref{thm:hZonotope} for the two-dimensional case, resulting in the $\th{i}$ normal vector for the $\th{i}$ generator: 
 \begin{equation} \label{eq:normalVector_2D}
  C_i^+= \frac{\begin{bmatrix} g_2\^i & g_1\^i \end{bmatrix}}{\left\|\begin{bmatrix} g_2\^i & g_1\^i \end{bmatrix}^T\right\|_2}.
 \end{equation}
 Thus, \eqref{eq:equal_d} has a unique solution so that the matrix $|\hat{C}^+\, \hat{G}\^{m}|$ in \eqref{eq:equal_d} can be inverted to obtain the exact stretching vector 
 \begin{equation*}
  \mu = |\hat{C}^+\, \hat{G}\^{m}|^{-1}(\Delta \hat{d} - \Delta \hat{d}_{trans}).
 \end{equation*}
 After stretching the generators of the minuend and shifting the center, we obtain the result of the proposition.
\end{proof}
Subsequently we evaluate our novel approaches using randomly generated zonotopes. Obviously, for this type of evaluation, aligned zonotopes will almost surely not be generated.

\section{Numerical Experiments} \label{sec:numericalExperiments}

We assess the performance of computing the Minkowski difference for random zonotopes of various orders and dimensions. Computation times are compared with different methods for computing exact, under-approximative, and over-approximative results. The accuracy is evaluated by the accuracy measure $\Theta$ which is based on the ratio of the volume of two sets $\mathcal{S}$ compared to a reference set $\mathcal{S}_{ref}$. Let us also introduce the operator $\mathrm{volume()}$ returning the volume of a set. To compare the accuracy obtained for different dimensions, the $n^{th}$ root of the volume ratio is taken, which is equivalent to the ratio of the edge length of n-dimensional cubes containing the corresponding volumes. Combining the mentioned aspects, the accuracy measure is computed as
\begin{equation} \label{eq:accuracyMeasure}
	\Theta=\left(\frac{\mathrm{volume}(\mathcal{S})}{\mathrm{volume}(\mathcal{S}_{ref})}\right)^{\frac{1}{n}}.
\end{equation}
In this work, we use the set obtained from the under-approximative method HC-u (explained later in detail) as a reference set. 

All computations are performed on a laptop with an Intel Core i7-8565U CPU with 1.80 GHz and 24 GB of memory. Parallelization of algorithms is not used in all evaluations. 

\subsection{Random Generation of Zonotopes}

To fairly assess the performance, random zonotopes are generated. Each scenario has the following parameters:
\begin{itemize}
 \item The dimension $n$ of the Euclidean space $\mathbb{R}^n$.
 \item The order $\varrho_s$ of the subtrahend $\mathcal{Z}_s$.
 \item The order $\varrho_m$ of the minuend $\mathcal{Z}_m$.
 \item The maximum length $l^{s,max}$ of the generators $\|g\^{s,i}\|_2$ of $\mathcal{Z}_s$ is selected as $1$ without loss of generality.
 \item The maximum length $l^{m,max}$ of the generators $\|g\^{m,i}\|_2$ of $\mathcal{Z}_m$ is selected as $10\frac{\varrho_s}{\varrho_m}$. The fraction $\frac{\varrho_s}{\varrho_m}$ ensures that the size of the minuend and subtrahend are comparable when using different orders. The factor of $10$ is selected to ensure that most results are not empty. 
 \item The number of randomly generate instances for a given combination of the above parameters is chosen as $100$. All provided results are averaged over these instances.
\end{itemize}
Given the above parameters, a simple method for obtaining random generators would be to first randomly generate each entry of a generator by uniformly sampling values within $[-1,1]$. The generator would then be stretched to a length that is uniformly distributed within $[0,l^{max}]$. However, the directions of the resulting generators would not be uniformly distributed. Thus, we first generate points that are uniformly distributed on a unit hypersphere according to \cite{Muller1959}. Next, the generators are defined as the vector from the origin to the points on the hypersphere, which are stretched to achieve the desired length of the generators $l\^i=\|g\^{i}\|_2$. This is uniformly distributed within $[0,l^{max}]$. 

\subsection{Comparison with Polytope Implementation}\label{sec:comparisonPolytope}

In this subsection, we compare the computation times of our own MATLAB implementations with a method developed for polytopes. In particular, the following methods are evaluated:
\begin{itemize}
 \item Halfspace conversion (HC): This method returns the exact polytope of the Minkowski difference using the halfspace conversion as presented in Theorem~\ref{thm:hMinkowskiDifference}.
 \item Under-approximative enclosure of halfspace conversion (HC-u): This method returns a zonotope under-approximating the result of the halfspace conversion (HC) as presented in Theorem~\ref{thm:gMinkowskiDifference_under}. This method does not reduce the number of generators upfront (see Sec.~\ref{sec:generatorRemoval}) to save computation time. 
 \item Polytope conversion (Polytope): This method first converts the zonotopes to polytopes using Theorem~\ref{thm:hZonotope} (this conversion is excluded from measuring the computation time). The Minkowski difference is then computed with the \textit{Multi-Parametric Toolbox 3.0} using its default settings \cite{Herceg2013}.
\end{itemize}
We additionally assess the computation time when combining Minkowski difference with Minkowski addition, as it is required in many applications, see e.g. \cite{Rakovic2006}. In this evaluation, we compute $(\mathcal{Z}_m \ominus \mathcal{Z}_s)\oplus \mathcal{Z}_s$---this is not meaningful, but the average computation times would not change if we add a set other than $\mathcal{Z}_s$ with the same order and dimension. 

Tab.~\ref{tab:results} lists the average computation times. Since it is computationally infeasible to compute the volume of the obtained polytopes for dimensions greater than or equal to four, we cannot provide the accuracy of the results as defined in \eqref{eq:accuracyMeasure}. However, one can see later in Tab.~\ref{tab:results_over} that the values of the accuracy $\Theta$ for the under-approximation and the over-approximation are fairly close.

\begin{table*}[h!tb]
\centering
\caption{Average computational times of the Minkowski difference and its combination with the Minkowski addition for various scenarios. Each scenario is run 100 times and we declare a scenario as \textit{did not finish} (\textrm{dnf}) when a single set computation takes more than two hours (amounting to more than one week for $100$ instances) or the memory is exceeded. \mbox{} \vspace{0.3cm} \mbox{}}
\label{tab:results}
\centering
\begin{tabular}{@{}rrrcrrrcrrr@{}}\toprule
&&&& \multicolumn{5}{c}{Computation Time [s]} \\\cmidrule{5-9}
Dim. & \multicolumn{2}{c}{Order} && \multicolumn{3}{c}{$\mathcal{Z}_m \ominus \mathcal{Z}_s$} && \multicolumn{3}{c}{$(\mathcal{Z}_m \ominus \mathcal{Z}_s)\oplus \mathcal{Z}_s$} \\
\cmidrule{2-3} \cmidrule{5-7} \cmidrule{9-11}
$n$           & $\mathcal{Z}_m$         & $\mathcal{Z}_s$ && HC & HC-u & Polytope && HC & HC-u & Polytope \\ \midrule
$2$ & $2$ & $2$ && $0.001$ & $0.004$ & $0.006$ && $0.014$ & $0.004$ & $0.026$  \\  
$2$ & $4$ & $2$ && $0.001$ & $0.005$  & $0.006$  && $0.018$ & $0.005$ & $0.054$  \\  
$2$ & $2$ & $4$ && $0.001$ & $0.004$  & $0.009$  && $0.011$ & $0.004$ & $0.060$  \\
$2$ & $4$ & $4$ && $0.001$ & $0.004$  & $0.010$  && $0.015$ & $0.004$ & $0.158$  \\ \hline
$4$ & $2$ & $2$ && $0.001$ & $0.004$  & $0.148$  && \textrm{dnf} & $0.004$ & \textrm{dnf}  \\  
$4$ & $4$ & $2$ && $0.004$  & $0.013$ & $0.419$ && \textrm{dnf} & $0.013$ & \textrm{dnf} \\  
$4$ & $2$ & $4$ && $0.001$  & $0.005$ & $44.19$ && \textrm{dnf} & $0.005$ & \textrm{dnf} \\
$4$ & $4$ & $4$ && $0.004$  & $0.014$ & $91.96$ && \textrm{dnf} & $0.014$ & \textrm{dnf} \\ \hline
$6$ & $2$ & $2$ && $0.008$  & $0.017$  & $547.7$ && \textrm{dnf} & $0.017$ & \textrm{dnf} \\ 
$6$ & $4$ & $2$ && $0.441$  & $0.955$  & \textrm{dnf} && \textrm{dnf} & $0.955$ & \textrm{dnf} \\ 
$6$ & $2$ & $4$ && $0.012$  & $0.021$ & \textrm{dnf} && \textrm{dnf} & $0.021$ & \textrm{dnf} \\ 
$6$ & $4$ & $4$ && $0.473$  & $1.348$ & \textrm{dnf} && \textrm{dnf} & $1.348$ & \textrm{dnf} \\ 
\bottomrule
\end{tabular}
\end{table*}

Tab.~\ref{tab:results} shows that the zonotope implementations are faster than the polytope implementation of the \textit{Multi-Parametric Toolbox 3.0}, e.g. for $6$ dimensions and an order of $2$ for the minuend and the subtrahend, the method HC is more than $6\cdot10^4$ times faster. The difference in computation time is even more apparent when combining Minkowski difference and Minkowski addition. For dimensions equal or larger than $4$, we obtain the result \textit{did not finish} ($\mathtt{dnf}$) when the Minkowski difference is represented as a polytope. Thus, the method HC-u returning zonotopes is several orders of magnitude faster when combining Minkowski difference with Minkowski addition.

\subsection{Comparison of Methods for Under-Approximation}

As a further comparison, we evaluate our under-approximative approach with and without order reduction and compare those results with an alternative method presented in \cite[Theorem~7]{Raghuraman2022}. This alternative method is based on an efficient enclosure check of zonotopes presented in \cite[Corollary 4]{Sadraddini2019}. Since we will present an efficient heuristic for this method, we briefly recall it and refer the reader to \cite[Theorem~7]{Raghuraman2022} for the proof.
Let us introduce $\tilde{\Gamma}\in \mathbb{R}^{n_m \times (n_m + 2n_s)}$ and $\tilde{\beta} \in \mathbb{R}^{n_m}$. An under-approximation $\mathcal{Z}_d = (c\^d,[G\^m G\^s]\mathrm{diag}(\phi))$, $\phi \in \mathbb{R}_+^{n_m+n_s}$ can be obtained according to \cite[Theorem~7]{Raghuraman2022} if a solution to the following linear program exists:
\begin{equation} \label{eq:minkowskiDifferenceProblem}
\begin{split}
  \begin{bmatrix} [G\^m, \, G\^s] \mathrm{diag}(\phi) & G\^s \end{bmatrix} &= G\^m \tilde{\Gamma}, \\
  c\^m -  (c\^d + c\^s) &= G\^m \tilde{\beta}, \\
  \|[\tilde{\Gamma}, \tilde{\beta}]\|_\infty &\leq 1.
\end{split}
\end{equation}
A disadvantage of this method is that the order of the result is $\varrho_m + \varrho_s$, i.e., that of the minuend plus the subtrahend, while the maximum order of the exact result is $\varrho_m$ as shown in Theorem~\ref{thm:hMinkowskiDifference}. It is also worth mentioning that the approach in \cite[Theorem~7]{Raghuraman2022} cannot be modified in a straightforward way to compute over-approximations because it is based on \cite[Corollary 4]{Sadraddini2019} for conservatively checking the enclosure of zonotopes.

We compare the following methods for under-approximating the Minkowski difference:
\begin{itemize}
 \item Method by Raghuraman and Koeln (RK): This method implements \eqref{eq:minkowskiDifferenceProblem}. In \cite{Raghuraman2022} the cost function for $\phi$ is not provided. As a standard choice, we minimize $\sum_i \phi_i$. To minimize computational overhead, we did not use tools such as YALMIP to solve \eqref{eq:minkowskiDifferenceProblem}, but directly provided \eqref{eq:minkowskiDifferenceProblem} in the form required by the MATLAB function \texttt{linprog}.
 \item Method by Raghuraman and Koeln with generator length priority (RK-p): This method is identical to RK, with a small but crucial difference: We use the cost function $\sum_i \|[G\^m, \, G\^s]_{(:,i)}\|_2 \phi_i$, where $[G\^m, \, G\^s]_{(:,i)}$ returns the $\th{i}$ column vector of the concatenated matrix consisting of $G\^m$ and $G\^s$. With other words, the importance of $\phi_i$ is weighted by the length of the generator it is multiplied with in \eqref{eq:minkowskiDifferenceProblem}.
 \item Under-approximative enclosure of halfspace conversion (HC-u): See description in Sec.~\ref{sec:comparisonPolytope}.
 \item Under-approximative enclosure of halfspace conversion using zonotope reduction (HC-r): This method is identical to HC-u with the added reduction from Sec.~\ref{sec:reductionMethod}. The parameter $\underline{\mu}$ is chosen as $0.3$. 
\end{itemize}

Tab.~\ref{tab:results_under} lists the average computation times and the accuracy measure $\Theta$. Since it is computationally infeasible to compute the volume of the obtained zonotopes for dimensions greater than or equal to six, we only provided results up to six dimensions. For six dimensions we could only present the results for a reduced amount of generators due to reaching the memory limit when computing the zonotope volumes. 

\begin{table*}[h!tb]
\centering
\caption{Average computational time of under-approximating the Minkowski difference.}
\label{tab:results_under}
\centering
\begin{tabular}{@{}rrrcrrrrcrrrr@{}}\toprule
Dim. & \multicolumn{2}{c}{Order} && \multicolumn{4}{c}{Accuracy $\Theta$} && \multicolumn{4}{c}{Computation Time [s]} \\
\cmidrule{2-3} \cmidrule{5-8} \cmidrule{10-13}       
$n$           & $\mathcal{Z}_m$         & $\mathcal{Z}_s$ && RK & RK-p & HC-u & HC-r  && RK & RK-p & HC-u & HC-r \\ \midrule
$2$ & $10$ & $10$ && $0.443$  & $1.000$ & $1.000$  & $0.991$ && $0.117$  & $0.116$ & $0.005$  & $0.010$ \\
$2$ & $20$ & $10$ && $0.474$  & $1.000$ & $1.000$  & $0.996$ && $0.007$  & $0.639$ & $0.637$  & $0.012$ \\
$2$ & $10$ & $20$ && $0.458$  & $1.000$ & $1.000$  & $0.989$ && $0.005$  & $0.294$ & $0.293$  & $0.013$ \\
$2$ & $20$ & $20$ && $0.406$  & $1.000$ & $1.000$  & $0.996$ && $1.330$  & $1.320$ & $0.007$  & $0.019$ \\ \hline 
$4$ & $10$ & $10$ && $0.162$  & $0.992$ & $1.000$  & $0.969$ && $1.894$  & $1.899$ & $0.265$  & $0.039$ \\
$4$ & $20$ & $10$ && $0.158$  & $0.995$ & $1.000$  & $0.987$ && $8.570$  & $8.548$ & $4.756$  & $0.074$ \\
$4$ & $10$ & $20$ && $0.123$  & $0.991$ & $1.000$  & $0.962$ && $3.648$  & $3.645$ & $0.303$  & $0.074$ \\
$4$ & $20$ & $20$ && $0.132$  & $0.994$ & $1.000$  & $0.986$ && $19.18$  & $19.06$ & $4.682$  & $0.198$ \\ \hline  
$6$ & $8$ & $8$ && $0.011$ & $0.979$ & $1.000$ & $0.902$ && $2.770$  & $2.717$  & $89.10$ & $0.077$ \\ 
\bottomrule
\end{tabular}
\end{table*}

%
%
%
%
%

The results in Tab.~\ref{tab:results_under} show that the effect of the newly proposed generator weighting for the approach in \cite[Theorem~7]{Raghuraman2022} is enormous. While the accuracy for RK decreases consistently with the dimension down to only $\Theta = 0.011$ for six dimensions, the accuracy stays high for \mbox{RK-p} resulting in $\Theta = 0.979$ for the same number of dimensions. Another interesting result is that the order reduction results in significantly shorter computation times. For six-dimensional problems, HC-r was more than three orders of magnitude faster on average compared to HC-u. However, for small-dimensional problems, HC-u is faster than HC-r due to avoiding the computational overhead for the order reduction. In terms of overall accuracy, HC-u is more accurate than RK-p, which in turn is more accurate than HC-r across all instances. Please note, however, that by tuning $\underline{\mu}$, HC-r can be made faster or more accurate. Regarding computation times, HC-r provided the shortest times while providing a good accuracy of $\Theta>0.9$ for all instances. 

\subsection{Comparison of Methods for Over-Approximation} \label{sec:comparison-over}

In this subsection, we compare methods for computing zonotopes that over-approximate the Minkowski difference of two zonotopes. For this problem no alternative methods exist. Because computing the exact volume is already infeasible in four-dimensional space, we compare the results with the previously introduced under-approximation HC-u. In particular, we compare the following methods:
\begin{itemize}
 \item Under-approximative enclosure of halfspace conversion (HC-u): See description in Sec.~\ref{sec:comparisonPolytope}.
 \item Over-approximative enclosure of halfspace conversion (HC-o): This method returns a zonotope over-approximating the result of the halfspace conversion (HC) as presented in Theorem~\ref{thm:gMinkowskiDifference_over}. This method does not reduce the number of generators upfront (see Sec.~\ref{sec:generatorRemoval}) to save computation time. 
 \item Coarse over-approximative enclosure of halfspace conversion (HC-co): This method is identical to HC-o, except that the shrinking of the halfspaces as presented in \eqref{eq:shrinkPolytope} is removed to save computational time.
\end{itemize}

The average computation times and the accuracy measure $\Theta$ are listed in Tab.~\ref{tab:results_under}. As explained above, the numerical experiments are limited to six dimensions due to otherwise reaching the memory limit when computing the zonotope volumes. 

\begin{table*}[h!tb]
\centering
\caption{Average computational time of under-approximating the Minkowski difference.}
\label{tab:results_over}
\centering
\begin{tabular}{@{}rrrcrrrcrrr@{}}\toprule
Dim. & \multicolumn{2}{c}{Order} && \multicolumn{3}{c}{Accuracy $\Theta$} && \multicolumn{3}{c}{Computation Time [s]} \\
\cmidrule{2-3} \cmidrule{5-7} \cmidrule{9-11}       
$n$           & $\mathcal{Z}_m$         & $\mathcal{Z}_s$ && HC & HC-o & HC-co && HC & HC-o & HC-co \\ \midrule
$2$ & $2$ & $2$ && $1.000$  & $1.000$ & $1.041$  && $0.004$  & $0.024$ & $0.003$ \\
$2$ & $4$ & $2$ && $1.000$  & $1.000$ & $1.001$  && $0.004$  & $0.044$ & $0.004$ \\
$2$ & $2$ & $4$ && $1.000$  & $1.000$ & $1.025$  && $0.004$  & $0.024$ & $0.004$ \\
$2$ & $4$ & $4$ && $1.000$  & $1.000$ & $1.001$  && $0.004$  & $0.046$ & $0.004$ \\ \hline
$4$ & $2$ & $2$ && $1.000$  & $1.106$ & $1.142$  && $0.005$  & $0.336$ & $0.005$ \\
$4$ & $4$ & $2$ && $1.000$  & $1.060$ & $1.068$  && $0.015$  & $7.649$ & $0.014$ \\
$4$ & $2$ & $4$ && $1.000$  & $1.136$ & $1.221$  && $0.005$  & $0.323$ & $0.004$ \\
$4$ & $4$ & $4$ && $1.000$  & $1.049$ & $1.056$ && $0.015$  & $7.649$ & $0.014$  \\ \hline
$6$ & $2$ & $2$ && $1.000$ & $1.239$ & $1.321$ && $0.016$  & $14.70$  & $0.019$ \\
\bottomrule
\end{tabular}
\end{table*}

As expected, the method HC-o is more precise, but computationally more expensive compared to HC-co. Especially for the six-dimensional problems, the coarse computation was significantly faster. Overall, the difference between the accuracy measure for the under-approximation compared to the over-approximation is rather moderate.


\section{Conclusions} \label{sec:conclusions}

This is the first work that presents a concept for under-approximating as well as over-approximating the Minkowski difference of zonotopes. Although the Minkowski difference can be computed exactly for the halfspace representation of zonotopes, doing so would result in the loss of the generator representation. This would require to perform subsequent computations, such as linear maps or Minkowski addition, using a halfspace representation. For such operations in particular, however, zonotopes in generator representation are very efficient and easily outperform the halfspace representation by several orders of magnitude in high-dimensional spaces. Even though the presented Minkowski difference approaches are computationally more expensive than linear maps and Minkowski addition for zonotopes, they are significantly faster than state-of-the-art algorithms for halfspace representations. It is believed that this new approach can be applied in many of the aforementioned fields, such as computing invariant sets and reachable sets of dynamic systems, robust control, robotic path planning, robust interval regression analysis, and cooperative games for system sizes that would not be possible using polytopes. 

\section*{Acknowledgments}

The author gratefully acknowledges financial support by the European Commission project UnCoVerCPS and justITSELF under the respective grant numbers 643921 and 817629.

\bibliographystyle{plain}
\bibliography{althoff_own,althoff_other}

\begin{thebibliography}{10}

\bibitem{Alamo2003}
T.~Alamo, J.~M. Bravo, and E.~F. Camacho.
\newblock Guaranteed state estimation by zonotopes.
\newblock In {\em Proc. of the 42nd IEEE Conference on Decision and Control},
  page 5831–5836, 2003.

\bibitem{Althoff2013a}
M.~Althoff.
\newblock Reachability analysis of nonlinear systems using conservative
  polynomialization and non-convex sets.
\newblock In {\em Hybrid Systems: Computation and Control}, page 173–182,
  2013.

\bibitem{Althoff2015a}
M.~Althoff.
\newblock An introduction to {CORA} 2015.
\newblock In {\em Proc. of the Workshop on Applied Verification for Continuous
  and Hybrid Systems}, page 120–151, 2015.

\bibitem{Althoff2016e}
M.~Althoff.
\newblock On computing the {Minkowski} difference of zonotopes.
\newblock arXiv:1512.02794, 2016.

\bibitem{Althoff2021b}
M.~Althoff, G.~Frehse, and A.~Girard.
\newblock Set propagation techniques for reachability analysis.
\newblock {\em Annual Review of Control, Robotics, and Autonomous Systems},
  4(1):369–395, 2021.

\bibitem{Althoff2014a}
M.~Althoff and B.~H. Krogh.
\newblock Reachability analysis of nonlinear differential-algebraic systems.
\newblock {\em IEEE Transactions on Automatic Control}, 59(2):371–383, 2014.

\bibitem{Althoff2011a}
M.~Althoff, C.~{Le Guernic}, and B.~H. Krogh.
\newblock Reachable set computation for uncertain time-varying linear systems.
\newblock In {\em Hybrid Systems: Computation and Control}, page 93–102,
  2011.

\bibitem{Althoff2021c}
M.~Althoff and J.~J. Rath.
\newblock Comparison of guaranteed state estimators for linear time-invariant
  systems.
\newblock {\em Automatica}, 130, 2021.
\newblock article no. 109662.

\bibitem{Althoff2008c}
M.~Althoff, O.~Stursberg, and M.~Buss.
\newblock Reachability analysis of nonlinear systems with uncertain parameters
  using conservative linearization.
\newblock In {\em Proc. of the 47th IEEE Conference on Decision and Control},
  page 4042–4048, 2008.

\bibitem{Althoff2010d}
M.~Althoff, O.~Stursberg, and M.~Buss.
\newblock Computing reachable sets of hybrid systems using a combination of
  zonotopes and polytopes.
\newblock {\em Nonlinear Analysis: Hybrid Systems}, 4(2):233–249, 2010.

\bibitem{Bailey1999}
G.~D. Bailey.
\newblock Coherence and enumeration of tilings of 3-zonotopes.
\newblock {\em Discrete \& Computational Geometry}, 22:119–147, 1999.

\bibitem{Bernabeu2001}
E.~J. Bernabeu, J.~Tornero, and M.~Tomizuka.
\newblock Collision prediction and avoidance amidst moving objects for
  trajectory planning applications.
\newblock In {\em Proc. of the IEEE International Conference on Robotics \&
  Automation}, page 3801–3806, 2001.

\bibitem{Bourgain1993}
J.~Bourgain and J.~Lindenstrauss.
\newblock Approximating the ball by a {Minkowski} sum of segments with equal
  length.
\newblock {\em Discrete \& Computational Geometry}, 9:131–144, 1993.

\bibitem{Bravo2006}
J.~M. Bravo, T.~Alamo, and E.~F. Camacho.
\newblock Robust {MPC} of constrained discrete-time nonlinear systems based on
  approximated reachable sets.
\newblock {\em Automatica}, 42:1745–1751, 2006.

\bibitem{Campi1994}
S.~Campi, D.~Haas, and W.~Weil.
\newblock Approximation of zonoids by zonotopes in fixed directions.
\newblock {\em Discrete \& Computational Geometry}, 11:419–431, 1994.

\bibitem{Chutinan2003}
A.~Chutinan and B.~H. Krogh.
\newblock Computational techniques for hybrid system verification.
\newblock {\em IEEE Transactions on Automatic Control}, 48(1):64–75, 2003.

\bibitem{Combastel2003}
C.~Combastel.
\newblock A state bounding observer based on zonotopes.
\newblock In {\em Proc. of the European Control Conference}, page 2589–2594,
  2003.

\bibitem{Combastel2005}
C.~Combastel.
\newblock A state bounding observer for uncertain non-linear continuous-time
  systems based on zonotopes.
\newblock In {\em Proc. of the 44th IEEE Conference on Decision and Control,
  and the European Control Conference}, page 7228–7234, 2005.

\bibitem{Danilov2000}
V.~I. Danilov and G.~A. Koshevoy.
\newblock Cores of cooperative games, superdifferentials of functions, and the
  {Minkowski} difference of sets.
\newblock {\em Journal of Mathematical Analysis and Applications}, 247:1–14,
  2000.

\bibitem{Felsner2001}
S.~Felsner and G.~M. Ziegler.
\newblock Zonotopes associated with higher {Bruhat} orders.
\newblock {\em Discrete Mathematics}, 241:301–312, 2001.

\bibitem{Ferrez2005}
J.-A. Ferrez, K.~Fukuda, and T~.M. Liebling.
\newblock Solving the fixed rank convex quadratic maximization in binary
  variables by a parallel zonotope construction algorithm.
\newblock {\em European Journal of Operational Research}, 166:35–50, 2005.

\bibitem{Fukuda2004b}
K.~Fukuda.
\newblock Frequently asked questions in polyhedral computation, 2004.

\bibitem{Fukuda2004}
K.~Fukuda.
\newblock From the zonotope construction to the {Minkowski} addition of convex
  polytopes.
\newblock {\em Journal of Symbolic Computation}, 38(4):1261–1272, October
  2004.

\bibitem{Ghorbal2009}
K.~Ghorbal, E.~Goubault, and S.~Putot.
\newblock The zonotope abstract domain {Taylor1+}.
\newblock In {\em Proc. of the 21st International Conference on Computer Aided
  Verification}, page 627–633, 2009.

\bibitem{Ghorbal2010}
K.~Ghorbal, E.~Goubault, and S.~Putot.
\newblock A logical product approach to zonotope intersection.
\newblock In {\em Proc. of the 27th International Conference on Computer Aided
  Verification}, page 212–226, 2010.

\bibitem{Ghosh1991}
P.~K. Ghosh.
\newblock An algebra of polygons through the notion of negative shapes.
\newblock {\em CVGIP: Image Understanding}, 54(1):119–144, 1991.

\bibitem{Girard2005}
A.~Girard.
\newblock Reachability of uncertain linear systems using zonotopes.
\newblock In {\em Hybrid Systems: Computation and Control}, LNCS 3414, page
  291–305. Springer, 2005.

\bibitem{Girard2006}
A.~Girard, C.~{Le Guernic}, and O.~Maler.
\newblock Efficient computation of reachable sets of linear time-invariant
  systems with inputs.
\newblock In {\em Hybrid Systems: Computation and Control}, LNCS 3927, page
  257–271. Springer, 2006.

\bibitem{Gaubolt2006}
E.~Goubault and S.~Putot.
\newblock Static analysis of numerical algorithms.
\newblock In {\em Proc. of the 13th International Static Analysis Symposium},
  page 18–34, 2006.

\bibitem{Gover2010}
E.~Gover and N.~Krikorian.
\newblock Determinants and the volumes of parallelotopes and zonotopes.
\newblock {\em Linear Algebra and its Applications}, 433(1):28–40, 2010.

\bibitem{Guibas2005}
L.~J. Guibas, A.~Nguyen, and L.~Zhang.
\newblock Zonotopes as bounding volumes.
\newblock In {\em Proc. of the Symposium on Discrete Algorithms}, page
  803–812, 2005.

\bibitem{Henk2004}
M.~Henk, J.~Richter-Gebert, and G.~M. Ziegler.
\newblock {\em Handbook of Discrete and Computational Geometry}, chapter Basic
  Properties of Convex Polytopes, page 355–382.
\newblock Chapman \& Hall/CRC, 2004.

\bibitem{Herceg2013}
M.~Herceg, M.~Kvasnica, C.N. Jones, and M.~Morari.
\newblock Multi-parametric toolbox 3.0.
\newblock In {\em Proc. of the European Control Conference}, page 502–510,
  Zürich, Switzerland, July 17–19 2013.
\newblock \url{http://control.ee.ethz.ch/~mpt}.

\bibitem{Immler2015a}
F.~Immler.
\newblock A verified algorithm for geometric zonotope/hyperplane intersection.
\newblock In {\em Proc. of the Conference on Certified Programs and Proofs},
  page 129–136, 2015.

\bibitem{Inuiguchi2002}
M.~Inuiguchi, H.~Fujita, and T.~Tanino.
\newblock Robust interval regression analysis based on {Minkowski} difference.
\newblock In {\em Proceedings of the 41st SICE Annual Conference}, page
  2346–2351, 2002.

\bibitem{Kerrigan2002}
E.~C. Kerrigan and D.~Q. Mayne.
\newblock Optimal control of constrained, piecewise affine systems with bounded
  disturbances.
\newblock In {\em Proc. of the 41st IEEE Conference on Decision and Control},
  page 1552–1557, 2002.

\bibitem{Kolmanovsky1998a}
I.~Kolmanovsky and E.~G. Gilbert.
\newblock Theory and computation of disturbance invariant sets for
  discrete-time linear systems.
\newblock {\em Mathematical Problems in Engineering}, 4:317–367, 1998.

\bibitem{Kopetzki2017}
A.-K. Kopetzki, B.~Schürmann, and M.~Althoff.
\newblock Methods for order reduction of zonotopes.
\newblock In {\em Proc. of the 56th IEEE Conference on Decision and Control},
  page 5626–5633, 2017.

\bibitem{Kurzhanski2000}
A.~B. Kurzhanski and P.~Varaiya.
\newblock Ellipsoidal techniques for reachability analysis.
\newblock In {\em Hybrid Systems: Computation and Control}, LNCS 1790, page
  202–214. Springer, 2000.

\bibitem{Kuehn1998b}
W.~Kühn.
\newblock {\em Mathematical Visualization}, chapter Zonotope Dynamics in
  Numerical Quality Control, page 125–134.
\newblock Springer, 1998.

\bibitem{Kuehn1998}
W.~Kühn.
\newblock Rigorously computed orbits of dynamical systems without the wrapping
  effect.
\newblock {\em Computing}, 61:47–67, 1998.

\bibitem{Le2013b}
V.~T.~H. Le, C.~Stoica, T.~Alamo, E.~F. Camacho, and D.~Dumur.
\newblock {\em Zonotopes: From Guaranteed State-estimation to Control}.
\newblock Wiley, 2013.

\bibitem{Le2013}
V.~T.~H. Le, C.~Stoica, T.~Alamo, E.~F. Camacho, and D.~Dumur.
\newblock Zonotopic guaranteed state estimation for uncertain systems.
\newblock {\em Automatica}, 49(11):3418–3424, 2013.

\bibitem{Lombardi2011}
W.~Lombardi, S.~Olaru, M.~Lazar, and S.-I. Niculescu.
\newblock On positive invariance for delay difference equations.
\newblock In {\em Proc. of the American Control Conference}, page 3674–3679,
  2011.

\bibitem{Mitchell2005}
I.~M. Mitchell, A.~M. Bayen, and C.~J. Tomlin.
\newblock A time-dependent {Hamilton\textendash Jacobi} formulation of
  reachable sets for continuous dynamic games.
\newblock {\em IEEE Transactions on Automatic Control}, 50:947–957, 2005.

\bibitem{Montejano1996}
L.~Montejano.
\newblock Some results about {Minkowski} addition and difference.
\newblock {\em Mathematika}, 43:265–273, 1996.

\bibitem{Mortari1996}
D.~Mortari.
\newblock The n-dimensional cross product and its application to the matrix
  eigenanalysis.
\newblock In {\em Proc. of the AIAA/AAS Astrodynamics Conference}, 1996.

\bibitem{Muller1959}
M.~E. Muller.
\newblock A note on a method for generating points uniformly on n-dimensional
  spheres.
\newblock {\em Communications of the ACM}, 2:19–20, 1959.

\bibitem{Raghuraman2022}
Vignesh Raghuraman and Justin~P. Koeln.
\newblock Set operations and order reductions for constrained zonotopes.
\newblock {\em Automatica}, 139, 2022.
\newblock article no. 110204.

\bibitem{Rakovic2006}
S.~V. Raković, E.~C. Kerrigan, D.~Q. Mayne, and J.~Lygeros.
\newblock Reachability analysis of discrete-time systems with disturbances.
\newblock {\em IEEE Transactions on Automatic Control}, 51(4):546–561, 2006.

\bibitem{Richards2007}
A.~Richards and J.~P. How.
\newblock Robust distributed model predictive control.
\newblock {\em International Journal of Control}, 80(9):1517–1531, 2007.

\bibitem{Roerig2009}
T.~Rörig, N.~Witte, and G.~M. Ziegler.
\newblock Zonotopes with large {2D}-cuts.
\newblock {\em Discrete \& Computational Geometry}, 42:527–541, 2009.

\bibitem{Sadraddini2019}
Sadra Sadraddini and Russ Tedrake.
\newblock Linear encodings for polytope containment problems.
\newblock In {\em Proc. of the 58th IEEE Conference on Decision and Control},
  page 4367–4372, 2019.

\bibitem{Yang2018}
X.~Yang and J.~K. Scott.
\newblock A comparison of zonotope order reduction techniques.
\newblock {\em Automatica}, 95:378–384, 2018.

\bibitem{Ziegler1995}
G.~M. Ziegler.
\newblock {\em Lectures on Polytopes}.
\newblock Graduate Texts in Mathematics. Springer, 1995.

\end{thebibliography}

\end{document}